\def \MyDate {November 6, 2025}
\def \MyVersion {V18}
\tikzstyle{block}=[draw opacity=0.7,line width=1.4cm]
\newcommand{\pdiff}[3]{   \if 1#1     \frac{\partial #2}{\partial #3}      \else   \frac{\partial^{#1} #2}{\partial #3^{#1}}\fi     }
\DeclareMathOperator*{\argmin}{arg\,min}
\newtheorem*{theorem}{Theorem}
\newtheorem*{remark}{Remark}
\newtheorem*{proposition}{Proposition}
\newtheorem*{definition}{Definition}
\newtheorem*{lemma}{Lemma}
\title{\LARGE \bf The nature of mathematical models}
\author{Andrea De Gaetano $^{1,2,3,4, \dagger}$ \\
    \small $^{1}$CNR-IASI, Consiglio Nazionale delle Ricerche, Istituto di Analisi dei Sistemi ed Informatica, Rome, Italy;\\
    \small $^{2}$ CNR-IRIB, Consiglio Nazionale delle Ricerche, Istituto per la Ricerca e l’Innovazione Biomedica, Palermo, Italy;\\
    \small $^{3}$ Department of Biomatics, \'Obuda University, Budapest, Hungary;\\
    \small $^{4}$ Department of Mathematics, Mahidol University, Bangkok, Thailand.\\
    \small $^{\dagger}$Correspondence: andrea.degaetano@cnr.it
}
\date{\MyDate \, for arXiv: \MyVersion}
\begin{document}

\maketitle

\bigskip

\textbf{Keywords}: mathematical modeling; model identification; parameter estimation; random variables

\bigskip

\bigskip

\bigskip

\section*{Abstract}
    Mathematical modeling has become pervasive in applications, not only in physics or economy, but also in biomedicine and other ``soft'' sciences. To the conceptual formulation of a model there often follows its identification by statistical parameter estimation, given available observations. While the nature of the modeling process as well as its relationship with the attending statistical computations could both appear obvious to the practitioner, it may be useful to formalize them in a precise way. Insight into the process of (linear and nonlinear) model parameter estimation can be obtained from the description of the geometry of estimation in case space. The objective then is to also describe the geometry of modeling in the abstract, and to show how the correspondence between the conceptual context and the computational context can be formally represented.

\section{Introduction}

Mathematical modeling is central to defining and solving problems in the physical sciences, engineering or economics. Over the past few decades it has found progressively more widespread applications also in biomedicine and other \textit{``soft''} sciences. In the following we will make use, just by way of example, of naively simple problems in biology: while this choice stems from the background of the author as a biomedical modeler, the mathematical ideas are the same for any of the above application fields.\\

When considering a concrete problem in any field, the mathematical modeler is usually confronted with two distinct tasks: the first consists of imagining a potentially useful functional relationship among quantities, amenable to analysis and comparison with the real-world (the \underline{modeling} phase proper); the second consists in attempting to quantify the model parameters, leading to some hopefully good representation of the main features of the considered real-world phenomena (the model \underline{identification} phase). In practice, these tasks may succeed each other cyclically, an abstract formulation leading to an identification attempt which, if reputed unsuccessful, leads to a further formulation and so on. \\

The initial, modeling phase thus consists in imagining possible relationships among some quantitative features of interest of the reality under study, relationships usually expressed as algebraic or differential equations of various kinds. The initial concern of the present work is to understand exactly what the variables in these equations represent. Suppose for example we wished to estimate weight from height, and we were to write, as a very first attempt to express weight as a function of height, the equation
\begin{equation}
    W = k H
    \label{eq.WkH}
\end{equation}

Notice that at the present stage we are not dealing with any numbers at all, we are at the first stage of idealization of a relationship between a generic concept of height with a generic concept of weight. Our basic question is then: ``What, in the above equation, is $H$?'' The immediate, na\"ive answer would be ``Height!''. Yes, clearly, but whose height? Obviously $H$ is not meant to represent the numerical value of the height of any single individual, to be put in relationship with the weight $W$ of that same individual: changing individuals would then in general change the value of the parameter $k$, which is instead intended to express some general proportion between the height of people and their weight. Similarly, Eq. \ref{eq.WkH} does not express the relationship between an algebraic variable $W$ and an algebraic variable $H$: we do not imagine people's weights to follow exactly that functional relationship. Rather, we might consider $H$ as indicating a quantity (height) which \textbf{could} be measured on some individual or on a group of individuals, quite possibly with variability due to heterogeneity among people and/or to measurement error: while we have not yet measured anything at the present stage of the modeling exercise, we could do so in the future. We are thus led to regard $H$ as a random variable.\\

At this point we consider what $W$ might be. Notice that, if we were hypothetically to reflect on the observable weight of individuals and, following the previous line of reasoning, we were to consider such observable weight of individuals as a random variable, then this random variable (let us denote it provisionally by $W^o$) would \textbf{not} be the same as $W$. That $W$ is a random variable follows formally from the fact that it is a function of some numerical value $k$ and of another random variable $H$, so $W$ is indeed a random variable. But $W$ is that random variable obtained from $H$ through some sort of mathematical manipulation, whereas $W^o$ is another random variable, representing some characteristic of the people whom we may study in the future, characteristic which may or may not depend on $H$ and, even if it did depend on $H$, it could well in general not depend on it through Eq. \ref{eq.WkH}. In other words, we might consider $W^o$ as the objective (hence the $^o$) random variable that we are trying to approximate through some other random variable $W$, which is itself computable from $H$. We are thus led to the definition of a model, which will be formalized in the following, as an operator yielding a random variable (in this case $W$), with which we try to approximate another, objective random variable (in this case $W^o$). As will be made more precise in the following, ``approximation'' has meaning within a space endowed at least with a norm, while other useful concepts, such as projections, will need further structure (inner products), thus leading us to deal with a Hilbert space of random variables as the natural setting for the abstract modeling effort.\\

If we now consider the second, identification phase of  modeling practice, we see that it is commonly performed through the minimization of a cost functional (\textit{e.g.} by Ordinary Least Squares, OLS, or more generally by Maximum Likelihood, ML). Identification of course depends on having sampled some individuals from the population of interest and having observed on them both the intended criterion or target variable and the intended predictor variable(s) (such as, in the previous example, both  $W^o$ and $H$, from which last realizations of $W$ may be calculated). This procedure, which is inherently computational, is executed within an $n$-dimensional Euclidean space, referred to here as the \textit{case space}. \\

While both phases of the model construction process, modeling and identification, have been conducted innumerable times over centuries and are being more and more intensively conducted by applied mathematicians all over the world, to the best of the author's knowledge the relationship between abstract model building and concrete statistical estimation computations has so far been only intuited as their "obvious" correspondence. The aim of the present work is to make formally precise the relationship between the ideal construction of a mathematical model in the space of random variables and the statistical estimation procedure in $n$-dimensional case space.\\
It should be noted that the consideration of the Hilbert space $\mathcal{H}$ of (finite-variance) random variables is not new, see for example Small \cite{Small1994}. However, subsequent developments have centered mainly on the characterization of conditional expectation as the projection onto a linear subspace of $\mathcal{H}$ \cite{Bobrowski2013}, generated by an appropriate sub-$\sigma$-algebra of the original probability space. The present work, instead, focuses on the modeling process itself, giving rise to a generally nonlinear model manifold, and on the relationship of this construction with the (geometrical interpretation of) commonly performed statistical parameter estimation procedures.\\
The following discussion links more closely with the subject matter of information geometry \cite{Amari2009, Amari2010, BarndorffNielsen1986}, which deals with manifolds of probability distributions. While there exists a correspondence between (equivalence classes of) random variables and their induced probability distributions, and while differential geometric methods in information geometry help establish properties (\textit{e.g.} of convergence) of families of probability distributions, the focus here is on the direct expression of the model structure itself as a manifold in a space of random variables. In other words, while there exists a correspondence, the spaces dealt with in information geometry and here are different (one a space of distributions, the other a space of variables).\\

In the following, notation is first established. Then, supporting (well known) theorems are recalled and the formal constructions of a model, of a model function and of a model manifold are introduced. The geometry of statistical estimation, in the simple OLS case, is recalled. The problem is then tackled of how the challenges introduced by nonlinearity of the model in the two settings make the introduction of appropriate tangent spaces desirable. The mapping of the geometry in the random variable space to the geometry of the statistical estimation case space is finally formally defined. A brief final section recaps the results obtained, clarifies the relationship between model manifolds, conditional expectation and information geometry and points to possible further developments.

\section{Methods}

\subsection{Notation}

\subsubsection{\underline{Standard notation}}

Scalars are represented as (Latin or Greek) lower-scale plain-face letters ($a$, $\alpha$), vectors as lower-scale bold letters ($\boldsymbol{b}$, $\boldsymbol{\beta}$). However, when treating generic quantities, which may be either vector- or scalar-valued, we will often use plain-face rather than bold characters.

\smallskip

Matrices and higher-order arrays are written as (capital) boldface latin letters ($\boldsymbol{H}$ or $H$ when clear from the context). An array may show its dimensions as subscripted integers separated by a cross ($A_{n \times m}$ is the matrix  $A$ with $n$ rows and $m$ columns); an array element is identified by its indices (row $r$, column $c$) as $a_{r,c}$ or $a_{rc}$ when no confusion is possible. The $r^{th}$ row and $c^{th}$ column of the matrix $A$ are indicated respectively as $a_{r.} $  and $a_{.c}$ \,\,.  Notice the difference between $\boldsymbol{x}_i$ (the $i^{th}$ vector of a sequence), and  $x_i$ (the $i^{th}$ scalar of a sequence, such as the $i^{th}$  element of vector $\boldsymbol{x}$).

\smallskip

In order to distinguish between a random variable and its realization, we typically indicate the (scalar- or vector- valued) random variable upper case letter ($U$, $\boldsymbol{U}$) and its realization as the appropriate lower case letter ($u$, $\boldsymbol{u}$).

\smallskip

In order to keep notation consistent we will strive to have indices and dimensions correspond throughout, with  $i=1,\dots,n$  referring to cases (observations, criterion values, dependent variable values), $j=1,\dots,q$  referring to parameters, and $k=1,\dots,m$  referring to predictors (independent variables).

\smallskip

The real numbers are denoted by $\mathbb{R}$. Non-negative reals are denoted by $\mathbb{R}^+$. Spaces or manifolds are generally indicated by script capital letters like  $\mathcal{V}$,  $\mathcal{H}$, $\mathcal{M}$ or $\mathcal{Y}$.

\smallskip

Transposition of an array is denoted by superscript capital $T$, like in  $J^\intercal$. An array, of appropriate dimensions, having all elements equal to $0$  (or equal to $1$) is indicated as  $\boldsymbol{0}$   (or as  $\boldsymbol{1}$ ). The identity matrix of appropriate dimensions is indicated by  $\boldsymbol{I}$ .

\smallskip

The space spanned or generated by a set of vectors is indicated as \\$\mathcal{S} ( {\boldsymbol{x}_1, \boldsymbol{x}_2,\dots, \boldsymbol{x}_m} )$; when applied to a matrix $A$ the span function  $\mathcal{S} (A)$ indicates the space generated by the column vectors of the matrix.

\smallskip

The null space or kernel of a matrix $\boldsymbol{A}$ is denoted by $Ker(\boldsymbol{A})$,\,\,
$Ker (\boldsymbol{A}) = \{\boldsymbol{v} : \boldsymbol{A v} = 0\}$.

\smallskip

The orthogonal complement to a linear subspace $\mathcal{V}$ is denoted by  $\mathcal{V}^\perp$ ("\textit{$\mathcal{V}$-perp}"),\,\,
$\mathcal{V}^\perp = \{ \boldsymbol{x}: \boldsymbol{x}^\intercal  \cdot \boldsymbol{v} = 0 \ \  \forall \boldsymbol{v} \in \mathcal{V} \} $.

\smallskip

The generalized inverse of a matrix $\boldsymbol{A}$  is indicated as  $\boldsymbol{A}^-$  and satisfies
$\boldsymbol{A} \boldsymbol{A}^- \boldsymbol{A} = \boldsymbol{A}$  .

\smallskip

Probability measure is indicated by a capital  $P$, while the probability density function (pdf) of a random variable is indicated by a lower case  $p$ ; in this way, the probability of an event $A$ is
\begin{equation*}
    P(A) = \int_{A \in \Omega} dP
\end{equation*}

and if $B$ is a measurable set in $\mathbb{R}$ with respect to the usual Borel $\sigma$-algebra $\mathcal{B}$, and  $X$ is a random variable from $\Omega$ to $\mathbb{R}$ with density $p_X(x)$ then
\begin{equation*}
    P_X(B) := \int_{X^{-1}(B) \in \Omega} dP  =  \int_{B} p_X(x) dx
\end{equation*}

\smallskip

Under these conditions the expectation of a function $g(X)$ of the random variable $X$ is
\begin{equation*}
    E_{p_X}[g(X)] = \int_{-\infty}^{\infty} g(x)p(x)dx
\end{equation*}

\smallskip

Often used abbreviations are   \ \ WRT \ \ (With Respect To)   and  \ \ WLOG \ \ (Without Loss Of Generality).

End of proofs are indicated with the square box $\qed$ .

\smallskip

\subsubsection{\underline{Further notation}}

So far, the notation reflects standard statistical practice. In what follows, however, some necessary further notational conventions are established, which will make it easier to develop the argument in a precise way.

\smallskip

We let  $\boldsymbol{x}^o = [x^o_i]_{n \times 1}$ be a vector of observed ("o" stands for observed) realizations of a dependent variable (criterion), possibly corresponding to arbitrarily assigned or measured (assumedly without error) \underline{values} of  $m$  independent variables (predictors)  $\boldsymbol{U} = [u_{ik}]_{n \times m} = [\boldsymbol{u}_{i.}]_{n \times m}$ , one of which is often time $t$. We also let  $\boldsymbol{\theta} = [\theta_j]_{q \times 1}$  be a (vector-valued) parameter,  $\boldsymbol{\theta} \in \boldsymbol{\Theta} \subset \mathbb{R}^q$, and we suppose that the (generally nonlinear) functional form of the relationship between criterion and predictors is

\begin{equation*}
    x^o_i = x(\boldsymbol{u}_i,\boldsymbol{\theta}) + \varepsilon_i
\end{equation*}

or

\begin{equation*}
    \boldsymbol{x}^o = \boldsymbol{x}(\boldsymbol{U},\boldsymbol{\theta}) + \boldsymbol{\varepsilon}
\end{equation*}

where $\boldsymbol{\varepsilon}$ is an $n \times 1$ vector of "errors", \textit{i.e.} of casually, randomly occurring differences between modeled and observed values of the criterion for the given predictors. Notice that while $\boldsymbol{x}^o$ is a column array of values, $x()$ is a scalar function and $\boldsymbol{x}()$ is a vector-valued function.

\smallskip

In much of our discussion we will not need to indicate the predictors explicitly, since they are part of the experimental design and are fixed. With a slight abuse of notation we will thus simply write our predictor function as

\begin{equation*}
    \boldsymbol x = \boldsymbol{x}(\boldsymbol{\theta}) \quad \left ( = \boldsymbol{x}(\boldsymbol{U},\boldsymbol{\theta}) \right )
\end{equation*}

\smallskip

We indicate with a "hat" the estimated value of the parameter, $\hat{\boldsymbol{\theta}}$, to which corresponds a "best" prediction or forecast  $\hat{\boldsymbol{x}} = \boldsymbol{x(\hat{\boldsymbol{\theta}})}$. We also suppose that, the postulated relationship being the true one, there exists a "true" value of the parameter, $\boldsymbol{\theta}^*$, to which corresponds the true state of the system $\boldsymbol{x}^*= \boldsymbol{x}(\boldsymbol{\theta}^*)$.

\smallskip

For any value of the parameter we may compute the first derivative of the forecast with respect to the parameter, the Jacobian:

\begin{equation*}
    \boldsymbol{J} = \left [ \frac{\partial x_i(\boldsymbol{\theta})}{\partial \theta_j}  \right ] _{n \times q}
\end{equation*}

and, more specifically, we compute the Jacobian at the estimated parameter value as

\begin{equation*}
    \hat{\boldsymbol{J}} = \left[ \frac{\partial x_i(\hat{\boldsymbol{\theta}})}{\partial \theta_j}  \right ] _{n \times q}  = \left.  \frac{\partial x_i(\boldsymbol{\theta})}{\partial \theta_j}  \right|_{\hat{\boldsymbol{\theta}}}
\end{equation*}

\smallskip

Using an alternative notation for the Jacobian we may write (in the spirit of Seber \cite{Seber2003})

\begin{equation*}
    \boldsymbol{\overset{.}{x}} = \boldsymbol{J} =  \frac{\partial x_i(\boldsymbol{\theta})}{\partial \theta_j}
\end{equation*}

so that the three-dimensional array of second-order partial derivatives is denoted as

\begin{equation*}
    \boldsymbol{\overset{..}{x}} = \left [ \frac{\partial^2 x_i(\boldsymbol{\theta})}{\partial \theta_j \ \partial \theta_k} \right ]_{q \times n \times q}
\end{equation*}

\medskip

with the obvious meaning being attributed to $\hat{\boldsymbol{\overset{.}{x}}}$,  $\hat{\boldsymbol{\overset{..}{x}}}$,  $\boldsymbol{\overset{.}{x}}^*$ and  $\boldsymbol{\overset{..}{x}}^*$ .

Multiplications involving these three-dimensional arrays are defined, for the scope of the present work, simply as the slice-wise equivalent: the array  $\boldsymbol{A}_{m \times n \times q}$ is treated as a pile of  $n$  slices, such that the $i$-th slice is an  $m \times q$  matrix   $(\boldsymbol{A}_{.i.})_{m \times q}$ . Then pre-multiplication by a matrix  $\boldsymbol{B}_{r \times m}$ and post-multiplication by a matrix  $\boldsymbol{C}_{q \times s}$ are defined as producing  $n$-piles of  matrices  $(\boldsymbol{B A}_{.i.})_{r \times n \times q}$  or  $(\boldsymbol{A}_{.i.} \boldsymbol{C})_{m \times n \times s}$ respectively, with the understanding that dimensions collapsing to $1$ disappear, so that, for example,  we treat   $\boldsymbol{v}_{1 \times m} \boldsymbol{A}_{m \times n \times q}$  simply as the matrix  $(\boldsymbol{vA})_{n \times q}$.

\begin{itemize}
    \item   a \textbf{bold} symbol indicates a vector. Thus $\boldsymbol{\varepsilon}$, $\boldsymbol{\xi}$ and $\boldsymbol{e}$ are vectors.
    
    \item A \underline{Hamel basis} of a vector space $\mathcal{V}$ over a field $\mathbb{K}$ is a set $\mathcal{B} \subset \mathcal{V}$ such that every element $v \in \mathcal{V}$ can be written uniquely as a \emph{finite} linear combination of elements of $\mathcal{B}$. This means that $\mathcal{B}$ is both linearly independent and spans $\mathcal{V}$ in the sense that for every $v \in \mathcal{V}$, there exist finitely many $b_i \in \mathcal{B}$ and scalars $\alpha_i \in \mathbb{K}$ such that $v = \sum_{i=1}^{n_v} \alpha_i b_i$. 
    
    \item In finite-dimensional spaces (such as $\mathbb{R}^d$) the standard basis with $d$ elements is a Hamel basis. However, in infinite-dimensional spaces, Hamel bases are typically uncountable and non-constructive, relying on the Axiom of Choice for their existence. 
    
    \item A \underline{Schauder basis} allows infinite (typically countable) linear combinations. It must be defined in the context of topological vector spaces, where convergence of the series is required. 

    \item An orthonormal basis for an $n$-dimensional Hilbert space (such as $\mathbb{R}^n$) is a collection of $n$ unit-length, orthogonal basis vectors, say $\boldsymbol H = \{ \boldsymbol{\eta}_1,\dots,\boldsymbol{\eta}_n\}$ or $\boldsymbol P = \{ \boldsymbol{\rho}_1,\dots,\boldsymbol{\rho}_n\}$.

    \item a vector $ \boldsymbol{v}$ is the sum of its \underline{components} $\{ \boldsymbol{v}_i\}_{i=1}^n$, each of which can be expressed as a \underline{coefficient} multiplying a basis vector, say $\boldsymbol{v}_i = \text{v}_{\eta}^i \cdot \boldsymbol{\eta}_i$, where we emphasize that the coefficient $\text{v}_\eta^i$ refers to the chosen basis $\boldsymbol H$  \\ {\footnotesize (Notice that we are NOT using here Einstein's summation notation, the dot above denotes scalar multiplication in the vector space $\mathbb{R}^n$ over the reals)}

    \item we may collect the coefficients into a column array of coefficients, relative to the chosen basis, say ${\text{\textbf{v}}}_\eta = \{ \text{v}^1_\eta, ..., \text{v}^n_\eta\}^\intercal$ \,\, or \,\, ${\text{\textbf{v}}}_\rho = \{ \text{v}^1_\rho, ..., \text{v}^n_\rho\}^\intercal$

    \item we can thus concisely express a vector as its expansion with respect to a given basis: \,\, $\boldsymbol{v} = \boldsymbol{E} {\text{\textbf{v}}}_\eta = \text{v}_{\eta}^i \boldsymbol{\eta}_i$ \,\, or \,\, $\boldsymbol{v} = \boldsymbol{P} {\text{\textbf{v}}}_\rho = \text{v}_{\rho}^i \boldsymbol{\rho}_i$ \\ {\footnotesize (Notice that here we ARE using Einstein's summation notation)}

\end{itemize}

\subsection{Geometry of Estimation in Case Space ($\mathbb{R}^n$)}

We summarize here the standard geometrical interpretation of Ordinary Least\\ Squares parameter estimation. While none of the material in this section is new\cite{Bates1988, Seber1989, Seber2003}, the following description should help clarify the correspondence of this with subsequently presented material.\\

When talking about a model in an applied setting we  typically employ the ``usual representation''

\begin{equation}
    x = x(u,\boldsymbol{\theta})
\end{equation}

where a dependent variable $x$ is assumed to depend upon some independent predictor $u$ (often indicating time in dynamical systems) via some functional form and some $q$-dimensional parameter $\boldsymbol{\theta}$. This relationship is typically graphed showing curves of $x$ as a function of $u$, different curves corresponding to different values of $\boldsymbol{\theta}$. Since by changing the parameter value the prediction $x(u,\boldsymbol{\theta})$ obviously changes, this clearly leads to thinking about estimation as adapting the predicted curve by minimizing some functional of the ``errors'' (differences between predicted and observed values of $x$), yielding \,\, $\hat {\boldsymbol{\theta}}$ \,\, as the \,$\argmin$\, of the functional.\\

However, the values $u$ takes are fixed (by design or by sampling), and we may drop $u$ from notation, since what we are really interested in is the relationship between parameter and prediction. Also, we may indicate with $\boldsymbol{x}$ the set of predicted values that $x$ takes for each of $n$ experimental units (corresponding to a given set of predictors) as $\boldsymbol{\theta}$ varies:

\begin{equation}
    \boldsymbol{x} = \boldsymbol{x}(\boldsymbol{\theta})
    \label{Eq:xbytheta}
\end{equation}

and we may also indicate with  $\boldsymbol{x}^o$ (observed) the actually measured or observed values of $x$ in the $n$ experimental units. We find thus useful to work in  ($n$-dimensional)  $\boldsymbol{x}$-space or \underline{case space}, $\mathbb{R}^n$. Case space is therefore defined as the $n$-dimensional Euclidean space where each axis represents one of the experimental units, and where along each axis we may identify both the actually observed value for that unit and any predicted value for that unit (depending on the chosen value for the parameter $\boldsymbol{\theta}$). In other words, case space is the $n$-dimensional space where all possible values of $\boldsymbol{x}$, either measured or predicted, referring to the $n$ experimental units can be represented. The appeal of working in case space consists in an immediate visualization of the geometry of the problem: the whole observed sample realization is a single point $\boldsymbol{x}^o$ in $n$-dimensional case space, and we may trace a (generally nonlinear) $q$-dimensional \underline{prediction surface}  $\mathcal{M}_{\boldsymbol{x}}$ as the set of values that the predicted $\boldsymbol{x}$ can take as $\boldsymbol{\theta}$ varies. Indeed, if $\boldsymbol{\theta} \in \mathbb{R}^q$ and $x(\boldsymbol{\theta})$ is sufficiently regular then the prediction surface ${\mathcal{M}_{\boldsymbol{x}}}$ is a $q$-dimensional manifold embedded in case space ($\mathbb{R}^n$). Notice that on this manifold we can have a ``canonical'' global chart map%\linelabel{canonicalchartmap1}
$\boldsymbol{x}^{-1}: \mathbb{R}^n \rightarrow \mathbb{R}^q$,   $\boldsymbol{x}^{-1}:\boldsymbol{x}=\boldsymbol{x}(\boldsymbol{\theta}) \in \mathbb{R}^n  \mapsto \boldsymbol{\theta} \in \mathbb{R}^q$.\\

A number of useful remarks may be offered at this point:

\begin{itemize}
    \item[\textbullet]  $\mathbb R^n$ has the structure of a vector space over the reals
    \item[\textbullet]  we can define on $\mathbb R^n$ the Standard Euclidean topology $\mathcal{O}$
    \item[\textbullet]  $\mathcal{O}$ generates the Borel $\sigma$-algebra $\mathcal{B}$
    \item[\textbullet]  $(\mathbb R^n,\mathcal{B})$ is a measurable space
    \item[\textbullet]  we have the Lebesgue measure $\lambda$ defined on $\mathcal{B}$ with values in $\mathbb R^+ := \{r \in \mathbb R, r \geqslant 0\}$, making $(\mathbb R^n,\mathcal{B}, \lambda)$ into a measure space
    \item[\textbullet]  we can endow $\mathbb R^n$ with a norm, the usual euclidean norm $\|\boldsymbol{x}\| := \sqrt{\sum_{i=1}^n x_i^2}$
    \item[\textbullet]  with respect to $\|\boldsymbol{x}\|$ the vector space  $\mathbb R^n$ is complete, hence Banach
    \item[\textbullet]  we can define on $\mathbb R^n$ an inner product compatible with the norm, \quad $\langle \boldsymbol{x},\boldsymbol{y} \rangle := \sum_{i=1}^n x_i y_i$, \quad making $\mathbb R^n$ Hilbert
\end{itemize}

\bigskip

\subsubsection{\underline{Linear relationship in case space: geometry}}

We begin by hypothesizing that the relationship given by Eq.\ref{Eq:xbytheta} between $\boldsymbol{x}$ and $\boldsymbol{\theta}$ is actually linear. Referring to Fig.\ref{fig:LinearCaseSpace} at page \pageref{fig:LinearCaseSpace} we can now identify a number of relevant geometric features of the problem:

\begin{itemize}
    \item ${ \boldsymbol{\theta}}$ is a $q$-dimensional parameter, $ \boldsymbol{\theta} \in \Theta \subset \mathbb{R}^q$
    \item $\boldsymbol{x} = \boldsymbol{x}(\boldsymbol{\theta})$ is a function mapping (in this case linearly) the set \, $\Theta$ \, of allowable parameter values into $\mathbb{R}^n$, $\boldsymbol{x} : \Theta \rightarrow \mathbb{R}^n$
    \item ${\mathcal{M}_{\boldsymbol{x}}}$ (the \underline{prediction surface}) is the set of possible values taken by the function $\boldsymbol{x}(\boldsymbol{\theta})$; in this linear case it is a $q$-dimensional subspace of $\mathbb{R}^n$
    \item since we assume by hypothesis the relationship given by Eq.\ref{Eq:xbytheta}, then  $\boldsymbol{x}^*$ is the \underline{state}, the supposedly true value of the observed phenomenon, necessarily lying on the supposedly true prediction surface, at a position indexed by the supposedly true parameter value $\boldsymbol{\theta}^*$
    \item $\boldsymbol{x}^o$ is the \underline{observation}, the observed point in case space, assumed to be generated by the true linear relationship, but to be affected by some observation error
    \item $\hat{\boldsymbol{x}}$ is our best estimate for the state, the point on ${\mathcal{M}_{\boldsymbol{x}}}$ closest to the observation $\boldsymbol{x}^o$, indexed by the parameter value $\hat {\boldsymbol{\theta}}$; notice that it makes sense to speak about  \textit{closeness} since, as mentioned before, $\mathbb{R}^n$ is normed; also, we are arbitrarily taking minimum distance between $\boldsymbol{x}^o$ and ${\mathcal{M}_{\boldsymbol{x}}}$ as the basis for our estimation (OLS)
    \item $\boldsymbol{{\varepsilon}}^o = \boldsymbol{x}^o - \boldsymbol{x}^*$ is the \underline{error} made by observing $\boldsymbol{x}^o$ instead of $\boldsymbol{x}^*$, the difference between the observation and the true state; it is the actual realization of the random variable ${\boldsymbol{\varepsilon}}$:  since we do not know $\boldsymbol{x}^*$ we do not know ${\boldsymbol{\varepsilon}}^o$, even though we might make hypotheses about how ${\boldsymbol{\varepsilon}}$ should be distributed
    \item ${\boldsymbol{e}}^o = \boldsymbol{x}^o - \hat{\boldsymbol{x}}$ is the \underline{residual}, the observed difference between the observation $\boldsymbol{x}^o$ and our best guess $\hat{\boldsymbol{x}}$; once again,  ${\boldsymbol{e}}^o$ is the actual realization of the random variable ${\boldsymbol{e}}$
    \item ${\boldsymbol{\xi}}^o = \hat{\boldsymbol{x}} - \boldsymbol{x}^*$ is the actual \underline{flaw} (we cannot qualify it as ``observed'' flaw, since we do not really observe it) %\linelabel{ref:theflaw}
    \footnote{ The term ``flaw'' is here introduced, since a name for $\boldsymbol{\xi}$ could not be found in the standard literature}, the assessment mistake we would make by placing our best guess at $\hat{\boldsymbol{x}}$ instead of at $\boldsymbol{x}^*$ and, as before, it is the realization of the random variable $\boldsymbol{\xi}$.
\end{itemize}

\begin{figure}
    \includegraphics[height=7cm]{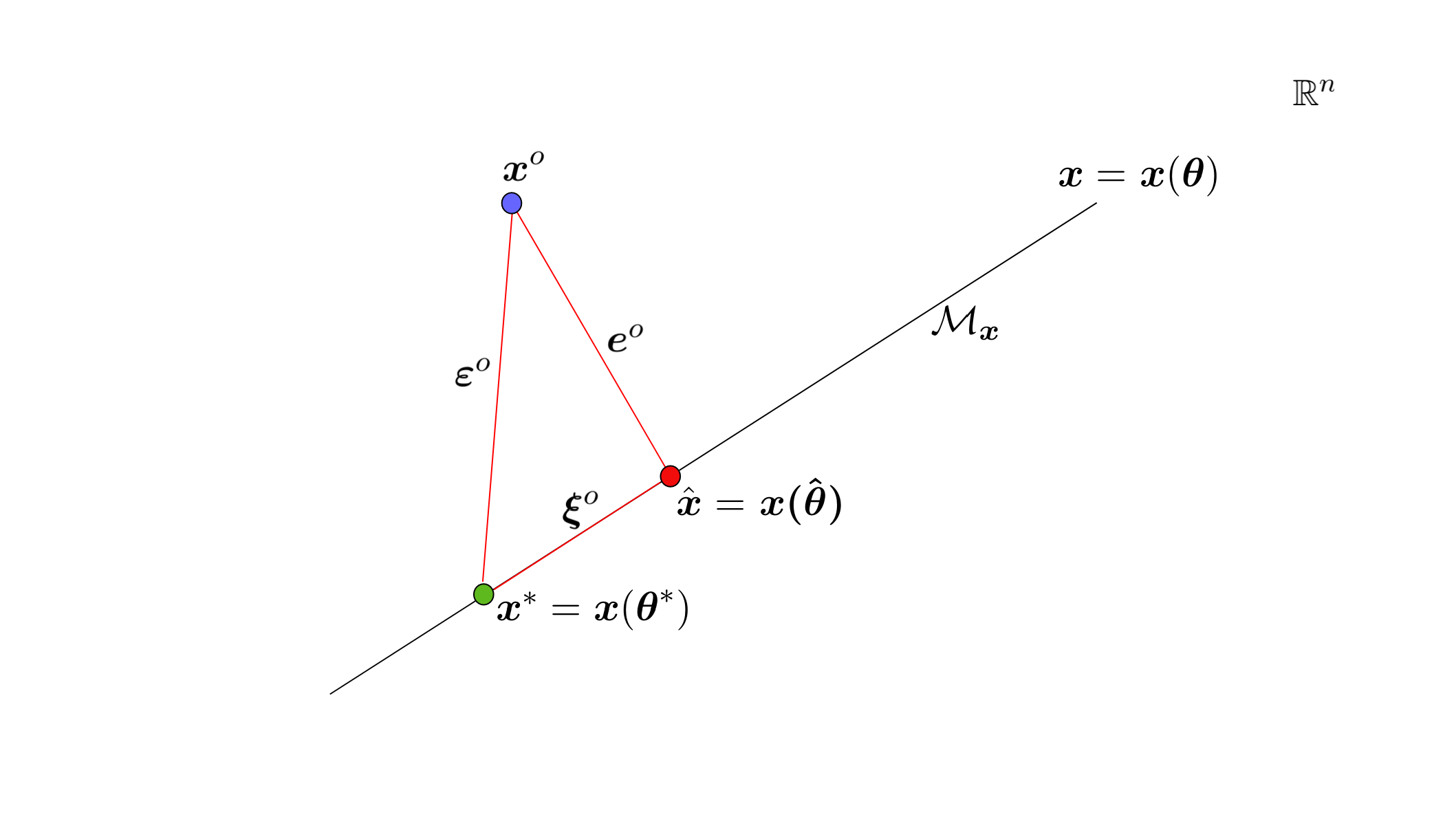}
    \caption{The geometry of a linear relationship in case space}
    \label{fig:LinearCaseSpace}
\end{figure}

\subsubsection{\underline{Linear relationship in case space: estimation by projection}}

Given the above geometry, we can now describe a possible procedure for estimating $\boldsymbol{\theta}$. Notice first that, since ${\mathcal{M}_{\boldsymbol{x}}}$ is a (nonempty) subspace of $\mathbb R^n$, it is convex and a Hilbert space in its own right. From the Hilbert projection theorem\cite{Rudin1987} a unique projection operator $\mathcal P_{\mathcal{M}_{\boldsymbol{x}}} : \mathbb R ^n \rightarrow {\mathcal{M}_{\boldsymbol{x}}}$ can be defined. In fact, if it were $\boldsymbol{x} = U \boldsymbol{\theta}$ for some matrix $U$, we could write $\mathcal P_{\mathcal{M}_{\boldsymbol{x}}} = U(U^\intercal U)^{-1} U^\intercal$.
Since this projection operator satisfies (always by the Hilbert projection theorem)
\begin{equation*}
	\| \boldsymbol{x} - \mathcal{P}_{\mathcal{M}_{\boldsymbol{x}}} \boldsymbol{x}\| < \| \boldsymbol{x} -  \boldsymbol{y}\|\,,\, \forall  \boldsymbol{y} \in {\mathcal{M}_{\boldsymbol{x}}},  \boldsymbol{y} \neq \mathcal{P}_{\mathcal{M}_{\boldsymbol{x}}}  \boldsymbol{x} \,,
\end{equation*}

we have that
$ \hat {\boldsymbol{\theta}} = \underset{\boldsymbol{\theta} \in \boldsymbol{\Theta}}{\argmin} (|| \boldsymbol{x}(\boldsymbol{\theta}) -  \boldsymbol{x}^o||^2)$ , \quad where $\boldsymbol{\Theta}$ is the admissible set of values of $\boldsymbol{\theta}$.\\
\noindent
The projection $\hat{\boldsymbol{x}} = \mathcal{P}_{\mathcal{M}_{\boldsymbol{x}}} \boldsymbol{x}^o$ of the observed sample $\boldsymbol{x}^o$ onto the prediction surface ${\mathcal{M}_{\boldsymbol{x}}}$  thus identifies the unique point $\hat {\boldsymbol{x}} \in {\mathcal{M}_{\boldsymbol{x}}}$
which is closest to $\boldsymbol{x}^o$ in the OLS
sense. The value $\hat {\boldsymbol{\theta}}$ such that $\hat {\boldsymbol{x}} = \boldsymbol{x}(\hat{\boldsymbol{ \theta}})$     is the desired (point) parameter estimate.\\

Under reasonable assumptions:
\begin{itemize}
    \item the relationship is true
    \item observations are determined by the true relationship, design variable(s) and the supposedly true parameter value $\boldsymbol{\theta}^*$, plus error
    \item errors on the observations of the $n$ experimental units are i.i.d. (independent identically distributed), with a normal distribution of zero mean and variance $\sigma^2$
\end{itemize}
\noindent we have:
\begin{enumerate}
    \item the OLS point estimate coincides with the Maximum Likelihood Estimate (with the attending nice properties of efficiency and consistency of the MLE);
    \item asymptotic confidence regions of $\hat {\boldsymbol{x}} = \mathcal{P}_{\mathcal{M}_{\boldsymbol{x}}}  \boldsymbol{x}^o$ on ${\mathcal{M}_{\boldsymbol{x}}}$, hence of $\hat {\boldsymbol{\theta}}$ can be obtained \cite{Seber2003}.
\end{enumerate}

\noindent
In fact under these hypotheses:
\begin{itemize}

    \item $\|\boldsymbol{\varepsilon}\|^2 = \sum_{i=1}^n \, \varepsilon_i^2$, where $\forall \, i \,\,\, \varepsilon_i\sim \mathcal{N}(0,\sigma^2)$  (Pythagoras theorem);

    \item under a rotation of the basis for $\mathbb{R}^n$ we can obtain a new basis with the first $q$ elements spanning ${\mathcal{M}_{\boldsymbol{x}}}$ and with the remaining $(n-q)$ elements orthogonal to ${\mathcal{M}_{\boldsymbol{x}}}$;

    \item the components of $\boldsymbol{\varepsilon}$ in the new basis are again i.i.d  $\mathcal{N}(0,\sigma^2)$ , which follows from the fact that an orthonormal linear transformation of a standard multivariate normal is a standard multivariate normal;

    \item since $\boldsymbol{\varepsilon} = \boldsymbol{\xi} + \boldsymbol{e}$, with $\boldsymbol{\xi} \perp \boldsymbol{e}$, we have again by Pythagoras theorem that $\|\boldsymbol{\varepsilon}\|^2 = \|\boldsymbol{\xi}\|^2  + \|\boldsymbol{e}\|^2 = $, where $\|\boldsymbol{\xi}\|^2 = \sum_{j=1}^n \, \xi_j^2$ and $\|\boldsymbol{e}\|^2 = \sum_{i=1}^n \, e_i^2$;

    \item notice however that with respect to the new basis\\
    $e_j = 0 \,\, \forall j \in \{1,\dots,q\}$ \,\,and\,\, $\xi_i = 0 \,\, \forall i \in \{q+1,\dots,n \}$  , while \\
    $\xi_j \sim \mathcal{N}(0,\sigma^2) ,\, j \in \{1,\dots,q \}$ \,\,and \,\, $e_i \sim \mathcal{N}(0,\sigma^2) ,\, i \in \{q+1,\dots,n \}$

\end{itemize}

From the above follow important statistical consequences:

\begin{itemize}

    \item $\frac{\|\boldsymbol{\varepsilon}\|^2}{\sigma^2}
    = \sum_{i=1}^n \, \frac{\varepsilon_i^2}{\sigma^2}
    = \sum_{i=1}^n \, \left ( \frac{\varepsilon_i}{\sigma
    } \right )^2$,\,\, where \,\, $ \frac{\varepsilon_i}{\sigma} \sim \mathcal{N}(0,1)$, \,\,so\\ $\frac{\|\boldsymbol{\varepsilon}\|^2}{\sigma^2} \sim \chi_n$\,\,{\footnotesize (Chi-square with $n$ degrees of freedom)}\\

    \item
    $\frac{\|\boldsymbol{\xi}\|^2}{\sigma^2} = \sum_{i=1}^n \, \frac{\xi_i^2}{\sigma^2} = \sum_{j=1}^q \, \left ( \frac{\xi_j}{\sigma} \right)^2  $, \,\,where\,\, $ \frac{\xi_j}{\sigma} \sim \mathcal{N}(0,1)$, \,\,so\\
    $\frac{\|\boldsymbol{\xi}\|^2}{\sigma^2} \sim \chi_q$\,\,{\footnotesize (Chi-square with $q$ degrees of freedom)}\\

    \item $\frac{\|\boldsymbol{e}\|^2}{\sigma^2} = \sum_{i=1}^n \, \frac{e_i^2}{\sigma^2} = \sum_{i=q+1}^n \, \left ( \frac{e_i}{\sigma} \right ) ^2 $, \,\,where\,\, $ \frac{e_i}{\sigma} \sim \mathcal{N}(0,1)$, \,\,so\\
    $\frac{\|\boldsymbol{e}\|^2}{\sigma^2} \sim \chi_{n-q}$\,\,{\footnotesize (Chi-square with $(n-q)$ degrees of freedom)}\\

    \item $\frac{\|\boldsymbol{\xi}\|^2}{\sigma^2}  /  \frac{\|\boldsymbol{e}\|^2}{\sigma^2}  = \frac{\|\boldsymbol{\xi}\|^2}{\|\boldsymbol{e}\|^2}  \sim \textit{F}_{q,(n-q)}$\,\,{\footnotesize (Fisher with $q$ and $(n-q)$ degrees of freedom)}\\

\end{itemize}

\noindent and we can compute a $(1-\alpha)$ confidence region for  $\|\boldsymbol{\xi}\|$. \\
In other words: under the stated hypotheses ($\varepsilon_i  $ i.i.d. $\sim \mathcal{N}(0,\sigma^2)$), taking into account the dimension $n$ of case space as well as the dimension $q$ of the prediction surface ${\mathcal{M}_{\boldsymbol{x}}}$, by observing the size of the residual from ${\mathcal{M}_{\boldsymbol{x}}}$, $\|\boldsymbol{e}\|$, we have information on the likely size of the flaw along ${\mathcal{M}_{\boldsymbol{x}}}$, $\|\boldsymbol{\xi}\|$.

\subsubsection{\underline{Nonlinear relationship in case space}}

\begin{figure}
	\includegraphics[height=7cm]{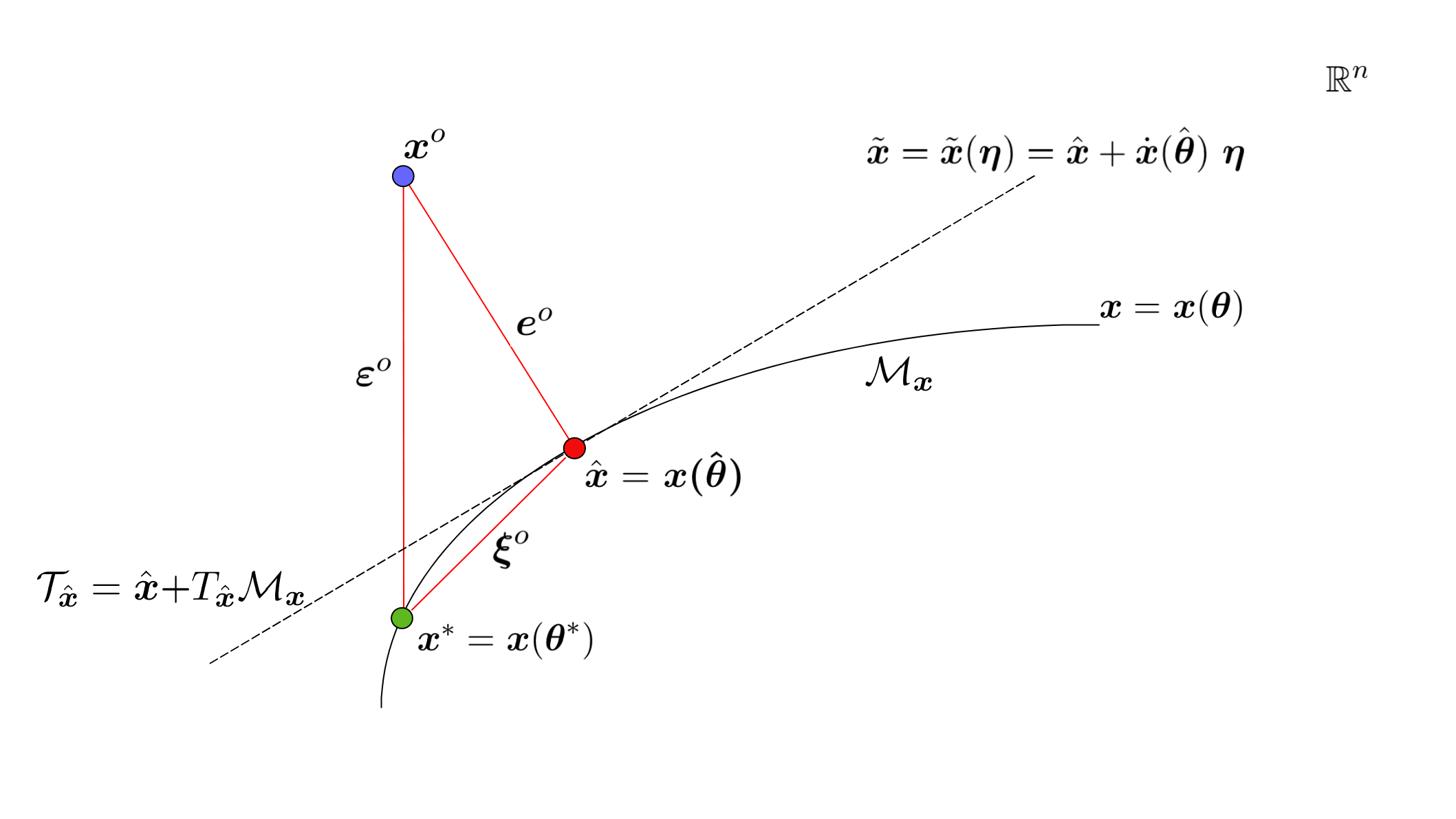}
	\caption{Nonlinear relationship in case space: original geometry}
	\label{fig:NonlinearCaseSpaceOriginalRn}
\end{figure}

By referring now to Fig.\ref{fig:NonlinearCaseSpaceOriginalRn} at page \pageref{fig:NonlinearCaseSpaceOriginalRn}  we see that a construction, similar to what was provided for the linear case, does not allow us to draw the same conclusions. The fact is that in this context ${\mathcal{M}_{\boldsymbol{x}}}$ is not a subspace of $\mathbb R ^n$ and in general it may not even be convex, hence the Hilbert projection theorem does not apply and a unique projection operator $\mathcal P_{\mathcal{M}_{\boldsymbol{x}}} : \mathbb R ^n \rightarrow {\mathcal{M}_{\boldsymbol{x}}}$ cannot be defined.\\

We may however still look for a value $ \hat {\boldsymbol{\theta}} = \underset{\boldsymbol{\theta} \in \boldsymbol{\Theta}}{\arg \min} (|| \boldsymbol{x}(\boldsymbol{\theta}) -  \boldsymbol{x}^o||)$ , \quad where again $\boldsymbol{\Theta}$ is the admissible set of values of $\boldsymbol{\theta}$, same as for the linear case. The existence, much less the uniqueness, of such a value $\hat {\boldsymbol{\theta}}$ would not be guaranteed, but we would hope an isolated minimum of the loss function $\ell(\boldsymbol{\theta}) = || \boldsymbol{x}(\boldsymbol{\theta}) -  \boldsymbol{x}^o||^2 = || \boldsymbol{e}(\boldsymbol{\theta})||^2$ would exist for some $\boldsymbol{x}(\boldsymbol{\theta})$ in the interior of a relevant open set \,\, $\mathcal{U} \subset {\mathcal{M}_{\boldsymbol{x}}}$\,: in this case, the value $ \hat {\boldsymbol{\theta}}$ such that $\hat {\boldsymbol{x}} = \boldsymbol{x} ( \hat {\boldsymbol{\theta}})$, the desired (point) parameter estimate, could be found numerically as \quad $ \hat {\boldsymbol{\theta}} = \underset{\boldsymbol{\theta} \in \boldsymbol{x}^{-1}(\mathcal{U}) \subset  \boldsymbol{\Theta}}{\arg \min} (|| \boldsymbol{x}(\boldsymbol{\theta}) -  \boldsymbol{x}^o||^2)$.\\

We would like to be able to make at least approximate confidence statements about the parameter estimate in the nonlinear case as well. To do so, and in case an isolated local minimum has been found, we might wish to use a local linear approximation $\mathcal{T}_{\hat{\boldsymbol{x}}} = {\hat{\boldsymbol{x}}} +  T_{\hat{\boldsymbol{x}}}{\mathcal{M}_{\boldsymbol{x}}}$ to the prediction surface ${\mathcal{M}_{\boldsymbol{x}}}$ around the estimated state $\hat{\boldsymbol{x}}$. Notice that \,$\mathcal{T}_{\hat{\boldsymbol{x}}}$\, is not in general a subspace, but rather an affine space.

The Hilbert projection theorem however extends naturally to affine subspaces, the key observation being that the problem can be reduced to projecting onto a linear subspace by translating the space appropriately. The projection onto an affine subspace is unique and satisfies an orthogonality condition relative to the associated direction subspace  \, $T_{\hat{\boldsymbol{x}}}{\mathcal{M}_{\boldsymbol{x}}}$ \,, see Appendix \ref{sec:HilbertAffine} for details. In this way, if an isolated minimum of $|| \boldsymbol{x}(\boldsymbol{\theta}) -  \boldsymbol{x}^o||^2$ at $\hat {\boldsymbol{x}}$ exists, then we can define a (locally unique) projection operator \,\, $\mathcal P_{\mathcal{T}_{\hat{\boldsymbol{x}}}} : \mathbb R ^n \rightarrow \mathcal{T}_{\hat{\boldsymbol{x}}}$, \,\, $ \mathcal P_{\mathcal{T}_{\hat{\boldsymbol{x}}}} : \boldsymbol{x}^o \mapsto \hat {\boldsymbol{x}}$\,.\\

\noindent
Under the same assumptions that we made for the linear case:
\begin{itemize}
    \item the relationship is true
    \item observations are determined by the true relationship, design variable(s) and the supposedly true parameter value $\boldsymbol{\theta}^*$, plus error
    \item errors on the observations of the $n$ experimental units are i.i.d. (independent identically distributed), with a normal distribution of zero mean and variance $\sigma^2$
\end{itemize}

\noindent we have that \cite{Seber1989,Bates1988}:
\begin{enumerate}
    \item there is no guarantee that the local OLS point estimate is the global optimum.

    \item approximate asymptotic confidence regions of $\hat {\boldsymbol{x}}$ on $\mathcal{T}_{\hat{\boldsymbol{x}}}$, hence of $\hat {\boldsymbol{\theta}}$ can be obtained: such regions would not be too different from the correct asymptotic confidence regions of $\hat {\boldsymbol{x}}$ on ${\mathcal{M}_{\boldsymbol{x}}}$ if the curvature of ${\mathcal{M}_{\boldsymbol{x}}}$ at $ \hat {\boldsymbol{x}}$ is not too large \cite{Panunzi2005}.
\end{enumerate}

\section{Geometry of Estimation in the Hilbert space $\mathcal{H}$ of random variables}

\subsection{Preliminaries}

From what has been presented in the Introduction, modeling appears to be the act of hypothesizing mathematical relationships among idealized quantities, some of which may be observed (\textit{i.e.} numerically measured in a certain number of cases). To make this idea precise, a few relevant, very standard results are listed in the following. Proofs can be found in any textbook of functional analysis or mathematical statistics \cite{Folland2013, Rohatgi1976}.

\bigskip

First of all, the notion of a quantity that might be observed in a certain number of cases  (with possibly different outcomes) coincides with the mathematical concept of a \textbf{random variable}.

\subsubsection*{\underline{Random variables}}
Let $(\Omega, \mathcal{F}, P)$ be a probability space.

Recall that an $n$-dimensional real random variable $U:\Omega\rightarrow\mathbb{R}^n$ is a \textit{measurable map} from the probability space $(\Omega,\mathcal{F},P)$ to the measurable space $(\mathbb{R}^n,\mathcal{B}(\mathbb{R}^n))$, where $\mathcal{B}(\mathbb{R}^n)$ is the Borel $\sigma$-algebra of $\mathbb R^n$. Recall also that, given a probability space $(\Omega, \mathcal{F},P)$ and a random variable $U:\Omega\rightarrow\mathbb{R}^n$, the $\sigma$-algebra generated by the r.v. $U$, $\sigma(U)$, is the smallest $\sigma$-algebra containing all sets $U^{-1}(A), A \in \mathcal{B}(\mathbb{R}^n)$.

\subsubsection*{\underline{The Doob-Dynkin lemma for random variables} \textit{(also see Fig.\ref{fig:DoobDynkinDiagram}, page \pageref{fig:DoobDynkinDiagram})} }

\begin{theorem}[Doob-Dynkin]
    Let $(\Omega, \mathcal{F},P)$ be a probability space, $(\mathbb{R}^n,\mathcal{B}_n)$ and $(\mathbb{R}^m,\mathcal{B}_m)$ measurable spaces, $U:\Omega\rightarrow\mathbb{R}^n$ a random variable. Then a function $X:\Omega\rightarrow\mathbb{R}^m$ is $\sigma(U)$-$\mathcal{B}_m$-measurable iff $\,\, \exists \, x:\mathbb{R}^n\rightarrow\mathbb{R}^m$, $x$ $\mathcal{B}_n$-$\mathcal{B}_m$-measurable, such that $X=x\circ U = x(U)$. In this case, consequently, $X$ is also a random variable.\\
\end{theorem}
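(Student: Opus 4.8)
The plan is to prove the two implications separately, the ``if'' direction being immediate and the ``only if'' direction requiring the standard measure-theoretic approximation argument.

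For the easy direction, suppose $X = x \circ U$ with $x$ being $\mathcal{B}_n$-$\mathcal{B}_m$-measurable. For any $B \in \mathcal{B}_m$ one has $X^{-1}(B) = U^{-1}\bigl(x^{-1}(B)\bigr)$, and since $x^{-1}(B) \in \mathcal{B}_n$ by measurability of $x$, the set $U^{-1}(x^{-1}(B))$ lies in $\sigma(U)$ by the very definition of the generated $\sigma$-algebra; hence $X$ is $\sigma(U)$-$\mathcal{B}_m$-measurable. Moreover, since $\sigma(U) \subseteq \mathcal{F}$, such an $X$ is automatically $\mathcal{F}$-$\mathcal{B}_m$-measurable, i.e. a genuine random variable --- this also disposes of the final assertion of the theorem.

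For the ``only if'' direction, I would first reduce to the scalar case $m = 1$: writing $X = (X_1,\dots,X_m)$ in coordinates, each $X_i$ is $\sigma(U)$-measurable (composition of $X$ with a continuous coordinate projection), and once each is represented as $X_i = x_i \circ U$ with $x_i:\mathbb{R}^n\to\mathbb{R}$ measurable, the map $x := (x_1,\dots,x_m)$ is $\mathcal{B}_n$-$\mathcal{B}_m$-measurable and satisfies $X = x \circ U$. For scalar nonnegative $\sigma(U)$-measurable $X$ I would run the ``standard machine'': if $X = \sum_{k=1}^N c_k \mathbf{1}_{A_k}$ is simple with $A_k \in \sigma(U)$, choose $B_k \in \mathcal{B}_n$ with $A_k = U^{-1}(B_k)$, disjointify the $B_k$ (legitimate because $U^{-1}$ commutes with all Boolean operations, so the corresponding $A_k$ are unchanged), and put $x := \sum_k c_k \mathbf{1}_{B_k}$, a $\mathcal{B}_n$-measurable function on $\mathbb{R}^n$ with $x\circ U = X$. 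For general nonnegative $X$, pick simple $\sigma(U)$-measurable $X_\ell \uparrow X$ with associated $x_\ell$ as above and set $x := \limsup_\ell x_\ell$, which is $\mathcal{B}_n$-measurable as a countable limit superior of measurable functions and satisfies $x(U(\omega)) = \limsup_\ell x_\ell(U(\omega)) = \lim_\ell X_\ell(\omega) = X(\omega)$ for every $\omega$. Finally, for signed scalar $X$ decompose $X = X^+ - X^-$, apply the previous step to each part, and combine --- defining the difference pointwise on $\mathbb{R}^n$ and setting it to $0$ wherever it is not well-defined, which is harmless since $X$ being real-valued means this ambiguity never arises on the image of $U$.

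The main obstacle I anticipate is a bookkeeping subtlety rather than a conceptual one: the approximating functions $x_\ell$ are pinned down only on $U(\Omega)$, so off that image the sequence $(x_\ell(y))_\ell$ need not converge, and one must ensure that the limiting object $x$ is nonetheless an everywhere-defined, $\mathcal{B}_n$-measurable function on all of $\mathbb{R}^n$. Taking $x := \limsup_\ell x_\ell$ (valued in $[0,+\infty]$, or truncated back to $\mathbb{R}$ on the measurable set where the genuine limit exists) resolves this, and the identity $x\circ U = X$ survives because on the image of $U$ the sequence does converge, precisely to $X$. Everything else --- the coordinatewise reduction, the disjointification, the positive/negative-part splitting --- is routine and rests only on the elementary fact that taking preimages under $U$ is a Boolean homomorphism commuting with countable unions and intersections.
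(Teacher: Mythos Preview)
Your proof is correct and follows the standard measure-theoretic route (indicator functions $\to$ simple functions $\to$ nonnegative via $\limsup$ $\to$ signed via positive/negative parts, with a coordinatewise reduction for $m>1$). The paper itself does not supply a proof of the Doob--Dynkin lemma: it states the result and explicitly refers the reader to standard textbooks (Folland, Rohatgi) for the argument, so there is no in-paper proof to compare against. Your write-up is exactly the textbook proof one would find in those references; the only cosmetic remark is that the disjointification of the $B_k$ in the simple-function step is in fact unnecessary, since $\mathbf{1}_{B_k}\circ U = \mathbf{1}_{U^{-1}(B_k)} = \mathbf{1}_{A_k}$ already guarantees $x\circ U = X$ regardless of overlaps among the $B_k$ off the range of $U$.
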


\begin{figure}
    \begin{center}
        \begin{tikzpicture}
            \begin{scope}[every node/.style={draw=white!60,thick,draw}]
                \node (A) at (0,0) {$\Omega$};
                \node (B) at (2,0) {$\mathbb{R}^n$};
                \node (C) at (2,-3)  {$\mathbb{R}^m$};
            \end{scope}

            \begin{scope}[>={Stealth[black]},
                every node/.style={},
                every edge/.style={draw=black}]
                \path [->] (A) edge [above] node {$U$} (B);
                \path [->] (A) edge [left] node {$X = x \circ U$} (C);
                \path [->] (B) edge [right] node {$x$} (C);
            \end{scope}
        \end{tikzpicture}
    \end{center}
    \caption{By the Doob-Dynkin lemma this diagram commutes for measurable functions}
    \label{fig:DoobDynkinDiagram}
\end{figure}

\subsubsection*{\underline{The spaces  $\mathcal{L}^p(\Omega,\mathcal{F},P)$ and $L^p(\Omega,\mathcal{F},P)$}}

The set of measurable functions $X$ on $\Omega$ such that $(\int_{\Omega}|X|^p dP)^{\frac{1}{p}}<\infty$ is denoted by  $\mathcal{L}^p(\Omega,\mathcal{F},P)$. It can be easily seen that $\mathcal{L}^p$ is a vector space.\\
Let $Y, X \in \mathcal{L}^p(\Omega,\mathcal{F},P)$. If $Y$ and $X$ differ only on a set of $P$-measure zero (\textit{i.e.} $Y \overset{P a.e.}{=} X$) then $(\int_{\Omega}|Y - X|^p dP)^{\frac{1}{p}} = 0$. This leads naturally to consider such measurable functions as equivalent if they are equal almost everywhere. We can thus quotient $\mathcal{L}^p$ by the equivalence relation $ \overset{P a.e.}{=} $:
the quotient space $L^p = \mathcal{L}^p/\overset{P a.e.}{=}$ is defined to be the set of all equivalence classes under the relation $\overset{P a.e.}{=}$.\\
We define the equivalence class of a measurable function $X$ under the equivalence relationship $\overset{P a.e.}{=}$ as $[X] \in L^p$, $[X] =\{Y \in \mathcal{L}^p : Y \overset{P a.e.}{=} X\}$.\\

We recall that a \underline{seminorm} $s$ is a function from a vector space $\mathcal{V}$ over the reals to ${\mathbb R}^+$, $s:\mathcal{V} \to {\mathbb R}^+$, such that $\forall a,b \in \mathcal{V}, \alpha \in {\mathbb R}$:
\begin{enumerate}
	\item $s(a+b) \leqslant s(a) + s(b)$\,\,\, (\textit{triangle inequality})
	\item $s(\alpha a) = |\alpha| a$\,\,\, (\textit{absolute homogeneity})
\end{enumerate}
Notice that from property $(2)$ non-negativity follows: $s(a) \geqslant 0$ and $s(\boldsymbol{0}) = 0$.\\
If furthermore the function $s$ \textit{separates points}, \textit{i.e.}
\begin{enumerate}
	\setcounter{enumi}{2}
	\item $s(a) = 0 \Rightarrow a = \boldsymbol{0}$
\end{enumerate}
then $s$ is a \underline{norm}, which we typically indicate with double bars, $||\cdot||$.

While $\mathcal{L}^p$ is only a seminormed space, $\|[X]\|_p := \|X\|_p=(\int_{\Omega}|X|^p d\mu)^{\frac{1}{p}} $ for any $X \in [X]$ \, is a norm in the quotient space $L^p = \mathcal{L}^p/\overset{P a.e.}{=}$. \\
By an abuse of notation we will say in the following $X \in L^p$ when we really mean $[X] \in L^p$. \\
It can finally be shown that  the normed space $L^p (\Omega,\mathcal{F},P))$, $p\geq 1$, is complete, hence it is a Banach space.

\subsubsection*{\underline{The Hilbert space of finite-variance random variables, $\mathcal{H} = L^2(\Omega,\mathcal{F},P)$}}

Suppose that in the above construction we take $p = 2$. Clearly, the set of measurable functions $X$ on $\Omega$ such that $(\int_{\Omega}|X|^2 dP)^{\frac{1}{2}} <\infty$ coincides with the set  $\{X : \int_{\Omega} X^2 dP <\infty \}$, which is the set of all square-integrable measurable functions on $\Omega$ or the set of all finite-variance random variables on $(\Omega,\mathcal{F},P)$ .\\
If we endow $L^2(\Omega,\mathcal{F},P)$ with the inner product $\langle X,Y\rangle = \int_{\Omega} X Y \, dP$, which is obviously compatible with the norm, we make it into an inner product space. Since, as noted above, it is a Banach space, it is also a Hilbert space.\\
We denote as $\mathcal{H}$,  $\mathcal{H} =  L^2(\Omega,\mathcal{F},P)$, this Hilbert space of random variables with finite variance, with $\langle X,Y \rangle = E(XY)= \int_{\Omega} X Y \, dP $.\\
Notice that $L^p(\Omega,\mathcal F, P)$ is not a Hilbert space for $p\neq2$: we will thus work strictly in $L^2(\Omega,\mathcal F, P)$.

\subsubsection*{\underline{Key facts about Hilbert spaces}}

We state in the following a few useful results, proofs can be found in any standard text on functional analysis.

\begin{theorem}[Canonical norm]
    Any inner product space $\mathcal{X}$ can be endowed with the norm
    \begin{equation*}
    	\|x \|:={\sqrt {\langle x ,x \rangle }}, \quad x \in \mathcal{X}.
    \end{equation*}
    This is called the \underline{inner product norm}, or \underline{canonical norm}, or the norm \underline{induced} by the inner product
\end{theorem}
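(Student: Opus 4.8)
The plan is to verify, directly from the inner-product axioms, the three defining properties of a norm recalled earlier in this section: absolute homogeneity, separation of points, and the triangle inequality. First I would observe that the expression $\sqrt{\langle x,x\rangle}$ is well-defined and real-valued: positive semidefiniteness of the inner product gives $\langle x,x\rangle \geq 0$ for every $x \in \mathcal{X}$, so the square root makes sense and is non-negative.

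Two of the three properties are immediate. For absolute homogeneity, I would compute, using bilinearity over the reals, $\|\alpha x\|^2 = \langle \alpha x,\alpha x\rangle = \alpha^2\langle x,x\rangle = |\alpha|^2\|x\|^2$, and take square roots. For separation of points, I would note that $\|x\| = 0$ forces $\langle x,x\rangle = 0$, which by definiteness of the inner product yields $x = \boldsymbol{0}$.

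The only real obstacle is the triangle inequality, and it reduces to the Cauchy–Schwarz inequality $|\langle x,y\rangle| \leq \|x\|\,\|y\|$, which I would prove first. If $y = \boldsymbol{0}$ the inequality is trivial; otherwise, for every real $t$ we have $0 \leq \langle x - ty,\, x - ty\rangle = \|x\|^2 - 2t\langle x,y\rangle + t^2\|y\|^2$, and substituting the minimizer $t = \langle x,y\rangle/\|y\|^2$ of the right-hand side gives $\langle x,y\rangle^2 \leq \|x\|^2\|y\|^2$. With Cauchy–Schwarz available, I would expand $\|x+y\|^2 = \|x\|^2 + 2\langle x,y\rangle + \|y\|^2 \leq \|x\|^2 + 2\|x\|\|y\| + \|y\|^2 = (\|x\|+\|y\|)^2$ and take square roots to obtain $\|x+y\| \leq \|x\| + \|y\|$.

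These three verifications establish that $\|\cdot\|$ is a norm on $\mathcal{X}$; its compatibility with the inner product is automatic, since by construction $\|x\|^2 = \langle x,x\rangle$. The argument is entirely elementary — no completeness or topological input is needed — and the single nontrivial ingredient is the quadratic-polynomial (discriminant) trick underlying Cauchy–Schwarz.
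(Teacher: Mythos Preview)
Your proof is correct and follows the standard textbook argument. The paper does not actually prove this statement: it is listed among ``a few useful results, proofs can be found in any standard text on functional analysis,'' so there is no in-paper proof to compare against. What you have written is precisely the classical verification such a text would give, with the Cauchy--Schwarz inequality (via the non-negativity of $\langle x-ty,\,x-ty\rangle$) as the one substantive step feeding the triangle inequality.
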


\begin{definition}
    The \underline{distance} between two elements of an Hilbert space $x,y \in \mathcal X $ is $\mathrm{d}(x,y)=\|x-y\|$, where $\| \cdot \| $ is the canonical norm
\end{definition}

\begin{definition}
    $x,y \in \mathcal X $ are called \underline{orthogonal} if $\langle x ,y \rangle =0 $
\end{definition}

\begin{theorem}
    For every point $x$ in a Hilbert space $\mathcal{H}$ and every nonempty closed convex subset $C \subseteq H$ there exists a unique vector $y \in C$ for which $\lVert x-y\rVert$ is equal to $\mathrm{d}(x,C) :=\underset{c\in C}{\inf}\|x-c\| $
\end{theorem}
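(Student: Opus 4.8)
The plan is to prove this as the classical Hilbert projection theorem, with the parallelogram identity doing essentially all of the work. Write $d := \mathrm{d}(x,C) = \inf_{c \in C}\norm{x-c}$; this is a well-defined nonnegative real because $C$ is nonempty. The proof then splits into an existence part and a uniqueness part, both driven by the same elementary estimate.

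For existence, I would first choose a minimizing sequence $y_n \in C$ with $\norm{x - y_n} \to d$. The key step is to show $(y_n)$ is Cauchy. Apply the parallelogram law $\norm{a+b}^2 + \norm{a-b}^2 = 2\norm{a}^2 + 2\norm{b}^2$ with $a = x - y_n$ and $b = x - y_m$, which after rearranging gives
\[
\norm{y_n - y_m}^2 = 2\norm{x-y_n}^2 + 2\norm{x-y_m}^2 - 4\norm{x - \tfrac{y_n+y_m}{2}}^2 .
\]
Here convexity of $C$ enters decisively: $\tfrac{y_n+y_m}{2} \in C$, so $\norm{x - \tfrac{y_n+y_m}{2}} \geq d$, and therefore $\norm{y_n-y_m}^2 \leq 2\norm{x-y_n}^2 + 2\norm{x-y_m}^2 - 4d^2$. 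As $n,m \to \infty$ the right-hand side tends to $2d^2 + 2d^2 - 4d^2 = 0$, so $(y_n)$ is Cauchy. Completeness of $\mathcal{H} = L^2(\Omega,\mathcal{F},P)$ then supplies a limit $y$; closedness of $C$ gives $y \in C$; and continuity of the norm yields $\norm{x-y} = \lim_n \norm{x-y_n} = d$, so $y$ attains the infimum.

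For uniqueness, suppose $y, y' \in C$ both satisfy $\norm{x-y} = \norm{x-y'} = d$. Running the same rearranged parallelogram identity with $y$ in place of $y_n$ and $y'$ in place of $y_m$ gives $\norm{y - y'}^2 = 4d^2 - 4\norm{x - \tfrac{y+y'}{2}}^2$, and since $\tfrac{y+y'}{2} \in C$ by convexity the second term is at least $d^2$, forcing $\norm{y-y'}^2 \leq 0$ and hence $y = y'$. I do not anticipate a genuine obstacle here: every ingredient — the parallelogram law, completeness, closedness, convexity — has already been put in place in the excerpt. The only point demanding a little care is the algebraic bookkeeping in the parallelogram identity, specifically recognizing that the cross terms collapse exactly into $4\norm{x - \tfrac{y_n+y_m}{2}}^2$, which is the one quantity that convexity lets us bound below by $d^2$; once that observation is made, both halves of the theorem follow from a single two-line estimate.
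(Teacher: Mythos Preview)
Your argument is correct and is precisely the standard textbook proof via the parallelogram law: a minimizing sequence is shown to be Cauchy using convexity of $C$ to bound the midpoint term, completeness and closedness deliver the limit in $C$, and the same identity forces uniqueness. The paper does not actually supply a proof of this theorem; it explicitly lists it among ``a few useful results, proofs can be found in any standard text on functional analysis,'' so your write-up is exactly the argument the paper is deferring to. One cosmetic remark: you invoke ``completeness of $\mathcal{H} = L^2(\Omega,\mathcal{F},P)$,'' but the theorem as stated is for an arbitrary Hilbert space, so it would be cleaner simply to say ``completeness of $\mathcal{H}$'' and not tie it to the particular $L^2$ realization.
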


\begin{theorem}[Hilbert projection theorem]
    Let $\mathcal{H}_1 \subset \mathcal H$ be a subspace of a Hilbert space $\mathcal H$. Then, $\forall x \in \mathcal{H},\ \exists$ a unique vector $\mathcal{P}x\in \mathcal{H}_1$ such that $\forall z\in \mathcal{H}_1,\ \langle x-\mathcal{P}x, z \rangle=0$. $\mathcal{P}x$ is called the \underline{projection} of $x$ onto $\mathcal{H}_1$. In addition, $\mathcal{P}x$ is the element of minimum distance from $\mathcal{H}_1$ to $x$.
\end{theorem}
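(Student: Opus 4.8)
The plan is to obtain this as a corollary of the preceding best-approximation theorem, supplemented by a one-variable variational argument that exploits the \emph{linearity} of $\mathcal{H}_1$ (the earlier theorem used only convexity). One preliminary point deserves attention: for that earlier theorem to apply, $\mathcal{H}_1$ must be a \emph{closed} subspace — otherwise the infimum $\mathrm{d}(x,\mathcal{H}_1)$ need not be attained — so I would make this hypothesis explicit, noting that it holds automatically for every $\mathcal{H}_1$ occurring later in the paper, since those are finite-dimensional. Granting closedness, a subspace is nonempty (it contains $\boldsymbol{0}$) and convex, so the previous theorem furnishes a unique $y \in \mathcal{H}_1$ realizing $\|x - y\| = \mathrm{d}(x,\mathcal{H}_1)$; set $\mathcal{P}x := y$.

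Next I would verify the orthogonality condition. Fix $z \in \mathcal{H}_1$. Because $\mathcal{H}_1$ is a subspace, $\mathcal{P}x + tz \in \mathcal{H}_1$ for every $t \in \mathbb{R}$, so the quadratic $\phi(t) := \|x - \mathcal{P}x - tz\|^2 = \|x-\mathcal{P}x\|^2 - 2t\,\langle x - \mathcal{P}x, z\rangle + t^2\|z\|^2$ has a global minimum at $t=0$. Hence $\phi'(0) = -2\,\langle x - \mathcal{P}x, z\rangle = 0$, i.e. $\langle x - \mathcal{P}x, z\rangle = 0$; since $z \in \mathcal{H}_1$ was arbitrary, this is exactly the asserted orthogonality. (Equivalently: $\phi(t)-\phi(0)\geq 0$ for all $t$ forces the discriminant $4\,\langle x - \mathcal{P}x, z\rangle^2 \leq 0$.)

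For uniqueness of the orthogonal vector, suppose $y_1, y_2 \in \mathcal{H}_1$ both satisfy $\langle x - y_i, z\rangle = 0$ for all $z \in \mathcal{H}_1$; subtracting gives $\langle y_1 - y_2, z\rangle = 0$ for all $z \in \mathcal{H}_1$, and taking $z = y_1 - y_2 \in \mathcal{H}_1$ yields $\|y_1 - y_2\|^2 = 0$, so $y_1 = y_2$. Finally, for the minimum-distance claim I would argue directly: for any $z \in \mathcal{H}_1$ write $x - z = (x - \mathcal{P}x) + (\mathcal{P}x - z)$ with $\mathcal{P}x - z \in \mathcal{H}_1$, so the orthogonality condition gives $\langle x - \mathcal{P}x,\, \mathcal{P}x - z\rangle = 0$ and the Pythagorean identity yields $\|x - z\|^2 = \|x - \mathcal{P}x\|^2 + \|\mathcal{P}x - z\|^2 \geq \|x - \mathcal{P}x\|^2$, with equality iff $z = \mathcal{P}x$.

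I do not expect any genuine obstacle: the argument is a textbook one. The single point requiring care — the ``main obstacle'' in the nominal sense — is the closedness of $\mathcal{H}_1$, which is implicit in the phrase ``subspace of a Hilbert space'' but worth flagging, since it is precisely what guarantees the minimizing point exists before the variational step can even be run; everything else is routine expansion of the inner-product norm.
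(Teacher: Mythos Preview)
Your proof is correct and is precisely the standard textbook argument. Note, however, that the paper does not actually supply its own proof of this theorem: it is listed among the ``Key facts about Hilbert spaces'' with the explicit remark that ``proofs can be found in any standard text on functional analysis,'' so there is nothing to compare against. Your observation about the implicit closedness hypothesis is apt and worth keeping, since the paper's statement omits it while the preceding theorem (which you invoke) does require it.
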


\begin{proposition}
    The operator $\mathcal{P}$ above has the following properties:
    \begin{itemize}
        \item $\mathcal{P}$ is a linear contraction
        \item $\mathcal{P}$ is idempotent
        \item $\mathcal{P}$ is a bounded and self-adjoint operator ($\forall \, x, y \in \mathcal{H} \,\,\, \left <\mathcal{P} x,y \right > = \left < x, \mathcal{P} y \right >$)
    \end{itemize}
\end{proposition}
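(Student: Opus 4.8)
The plan is to reduce every assertion to the orthogonality characterization furnished by the Hilbert projection theorem: $\mathcal{P}x$ is the \emph{unique} element of $\mathcal{H}_1$ for which $x - \mathcal{P}x$ is orthogonal to every $z \in \mathcal{H}_1$, i.e. $x - \mathcal{P}x \in \mathcal{H}_1^\perp$. The only structural fact I need beyond this is that $\mathcal{H}_1^\perp$ is itself a linear subspace, which is immediate since $\langle \cdot, \cdot \rangle$ is linear in each argument. With this in hand, each property follows by exhibiting a candidate vector, checking it lies in $\mathcal{H}_1$ while the associated remainder lies in $\mathcal{H}_1^\perp$, and then appealing to uniqueness.

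\textbf{Linearity and idempotency.} For $x, y \in \mathcal{H}$ and $\alpha, \beta \in \mathbb{R}$, the vector $\alpha\mathcal{P}x + \beta\mathcal{P}y$ lies in $\mathcal{H}_1$ (a subspace), and $(\alpha x + \beta y) - (\alpha\mathcal{P}x + \beta\mathcal{P}y) = \alpha(x - \mathcal{P}x) + \beta(y - \mathcal{P}y) \in \mathcal{H}_1^\perp$ (a subspace); uniqueness then forces $\mathcal{P}(\alpha x + \beta y) = \alpha\mathcal{P}x + \beta\mathcal{P}y$. For idempotency I first observe that any $z \in \mathcal{H}_1$ is its own projection, since $z \in \mathcal{H}_1$ and $z - z = 0 \in \mathcal{H}_1^\perp$; applying this to $z = \mathcal{P}x$ gives $\mathcal{P}^2 x = \mathcal{P}(\mathcal{P}x) = \mathcal{P}x$, so $\mathcal{P}^2 = \mathcal{P}$.

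\textbf{Contraction/boundedness and self-adjointness.} Writing the orthogonal decomposition $x = \mathcal{P}x + (x - \mathcal{P}x)$ with $\mathcal{P}x \perp (x - \mathcal{P}x)$, the Pythagorean identity gives $\|x\|^2 = \|\mathcal{P}x\|^2 + \|x - \mathcal{P}x\|^2 \ge \|\mathcal{P}x\|^2$, hence $\|\mathcal{P}x\| \le \|x\|$; combined with the linearity just established, $\mathcal{P}$ is a bounded linear contraction (operator norm at most $1$). For self-adjointness, for arbitrary $x, y \in \mathcal{H}$ I expand $\langle \mathcal{P}x, y \rangle = \langle \mathcal{P}x, \mathcal{P}y \rangle + \langle \mathcal{P}x, y - \mathcal{P}y \rangle$, and the second term vanishes because $\mathcal{P}x \in \mathcal{H}_1$ while $y - \mathcal{P}y \in \mathcal{H}_1^\perp$; the symmetric computation yields $\langle x, \mathcal{P}y \rangle = \langle \mathcal{P}x, \mathcal{P}y \rangle$, and comparing the two gives $\langle \mathcal{P}x, y \rangle = \langle x, \mathcal{P}y \rangle$.

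\textbf{On the main difficulty.} There is no genuine obstacle here: the statement is a textbook consequence of the projection theorem. The only point deserving attention is that every invocation of uniqueness is legitimate — i.e. that the candidate vectors genuinely land in $\mathcal{H}_1$ and in $\mathcal{H}_1^\perp$ — so I would state the subspace property of $\mathcal{H}_1^\perp$ explicitly at the outset and reuse it in both the linearity and the self-adjointness steps.
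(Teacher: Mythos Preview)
Your proof is correct and is exactly the standard textbook argument. The paper itself does not supply a proof of this proposition: it is listed among the ``Key facts about Hilbert spaces'' with the remark that ``proofs can be found in any standard text on functional analysis,'' so there is no paper-specific approach to compare against.
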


\begin{proposition}
    If $C$ is a closed vector subspace of a Hilbert space, then it is a Hilbert subspace. In addition, every finite dimensional vector subspace of an Hilbert space is closed.
\end{proposition}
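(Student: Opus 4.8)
The plan is to treat the two claims in turn, each of which comes down to the standard relationship between completeness and closedness in a complete metric space.

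For the first claim, I would start by noting that a vector subspace $C$ of a Hilbert space $\mathcal{H}$, equipped with the restriction of $\langle\cdot,\cdot\rangle$ to $C\times C$, is automatically an inner product space whose induced (canonical) norm is just the restriction of the norm of $\mathcal{H}$. The only thing left to verify is completeness of $C$ in that norm. Here I would invoke that a closed subset of a complete metric space is complete: any Cauchy sequence in $C$ is a fortiori Cauchy in $\mathcal{H}$, hence converges to some $x\in\mathcal{H}$ by completeness of $\mathcal{H}$; since $C$ is closed and every term of the sequence lies in $C$, the limit $x$ lies in $C$. Thus every Cauchy sequence in $C$ converges in $C$, so $C$ is complete, i.e. a Hilbert space (a Hilbert subspace of $\mathcal{H}$).

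For the second claim, let $V\subseteq\mathcal{H}$ be a subspace with $\dim V = d<\infty$. First I would fix an orthonormal basis $\{e_1,\dots,e_d\}$ of $V$, obtained by applying the Gram--Schmidt procedure to any algebraic basis of $V$. The crucial tool is the Pythagorean identity: for $v=\sum_{j=1}^d c_j e_j$ one has $\|v\|^2=\sum_{j=1}^d c_j^2$. Given $x\in\overline{V}$, pick $(v_n)\subset V$ with $v_n\to x$ in $\mathcal{H}$; writing $v_n=\sum_j c_j^{(n)}e_j$, the identity shows $\|v_n-v_m\|^2=\sum_j\big(c_j^{(n)}-c_j^{(m)}\big)^2$, so each coordinate sequence $\big(c_j^{(n)}\big)_n$ is Cauchy in $\mathbb{R}$ and hence converges to some $c_j$. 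Setting $v:=\sum_j c_j e_j\in V$, the same identity gives $\|v_n-v\|^2=\sum_j\big(c_j^{(n)}-c_j\big)^2\to 0$, so $v_n\to v$; by uniqueness of limits $x=v\in V$. Hence $\overline{V}=V$, i.e. $V$ is closed, and combining with the first part, every finite-dimensional subspace is in fact a Hilbert subspace.

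I do not expect a genuine obstacle here: both statements are routine. The one point deserving care is the last step of the finite-dimensional argument — passing from coordinatewise convergence to norm convergence of the vectors — which is precisely where finite dimensionality is used (equivalently, where one uses that all norms on $\mathbb{R}^d$ are equivalent and $\mathbb{R}^d$ is complete), and it is handled cleanly by the Pythagorean identity relative to the chosen orthonormal basis.
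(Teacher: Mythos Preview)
Your argument is correct and entirely standard. The paper does not actually prove this proposition: it is listed among ``a few useful results, proofs can be found in any standard text on functional analysis,'' so there is no paper proof to compare against, and your write-up would serve perfectly well as the missing details.
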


\subsection{The nature of mathematical models: definitions}

We have now the motivation and have available all the necessary tools for a formal treatment of what mathematical models for the applied sciences are. We refer to the above definition of $\mathcal{H}$.

\begin{definition}
    Given a random variable of interest or objective random variable, $\boldsymbol{X}^o$, possibly supposed to be related to some random variable $\boldsymbol{U}$, a \underline{\textbf{model}} of $\boldsymbol{X}^o$ is an operator $\boldsymbol{X}(\boldsymbol{\theta}, \boldsymbol{U}) :  \mathbb{R}^q \times \mathcal{H} \rightarrow \mathcal{H}$, whose arguments are some parameter $\boldsymbol{\theta} \in \mathbb{R}^q$ together with the predictor random variable(s) $\boldsymbol{U}$ (if any), and whose value is another random variable $\boldsymbol{X} = \boldsymbol{X}(\boldsymbol{\theta}, \boldsymbol{U})$.
\end{definition}

\begin{remark}
    That $\boldsymbol{X}$ is a random variable follows from the Doob-Dynkin lemma. If $\boldsymbol{X}(\boldsymbol{\theta}, \boldsymbol{U})$ is to be a good model, then $\boldsymbol{X}$ should be close to $\boldsymbol{X}^o$ in the metric of $\mathcal{H}$.
\end{remark}

\begin{definition}
    Having fixed the predictor(s) $\boldsymbol{U}$\,, we will typically consider the relationship between the parameter and the value of the operator, the \underline{\textbf{model map}} $\boldsymbol{X} = \boldsymbol{X}(\boldsymbol{\theta})$ \,.
\end{definition}

\begin{remark}
    Notice that the above is a slight abuse of notation: since we are typically interested in the relationship between the parameter $\boldsymbol{\theta}$ and the value $\boldsymbol{X}$ of the model $\boldsymbol{X}(\boldsymbol{\theta}, \boldsymbol{U})$, having for all intents and purposes fixed $\boldsymbol{U}$ we just consider $\boldsymbol{X}$ as a function of $\boldsymbol{\theta}$. To keep the two concepts distinct, however, we use the two different names "model" and "model map".
\end{remark}

For clarity, we distinguish between the model map defined above (having values in $\mathcal H$) and the corresponding model function, of the same functional form but having values in ${\mathbb R}^n$:

\begin{definition}
	A \underline{\textbf{model function}} is a map $\boldsymbol{x}(\boldsymbol{\theta}) :  \mathbb{R}^q \rightarrow {\mathbb R}^n$, whose argument is some parameter $\boldsymbol{\theta} \in \mathbb{R}^q$ (considering any possible predictors fixed) and whose value is a point in ${\mathbb R}^n$,  $\boldsymbol{x} : {\mathbb R}^q \to {\mathbb R}^n,\,\,\,  = \boldsymbol{x}: \boldsymbol{\theta} \mapsto \boldsymbol{x}(\boldsymbol{\theta})$.
\end{definition}

\begin{definition}
    Having fixed $\boldsymbol{U}$ and assuming sufficient regularity of the model, the set of values ${\mathcal{M}_{\boldsymbol{X}}} = \{ \boldsymbol{X}(\boldsymbol{\theta}), \boldsymbol{\theta} \in \boldsymbol{\Theta}\}$, indexed by the parameter, is the \underline{\textbf{model manifold}}, ${\mathcal{M}_{\boldsymbol{X}}} \subset \mathcal{H}$
\end{definition}

\subsection{Linear and nonlinear models}

If the hypothesized model is linear, \textit{e.g.} $     \boldsymbol{X}(\boldsymbol{\theta}) = \boldsymbol{U} \, \boldsymbol{\theta} $\,, then it is easy to see that
${\mathcal{M}_{\boldsymbol{X}}}$ is a vector space with elements in $\mathcal{H}$, hence a subspace of the Hilbert space $\mathcal H = L^2(\Omega,\mathcal{F},P)$.\\
As a subspace of a Hilbert space, ${\mathcal{M}_{\boldsymbol{X}}}$ in this case inherits its Hilbert space structure (norm, inner product), and a projection operator is guaranteed. Consequently, the point of projection $\hat{\boldsymbol{X}} = \mathcal P_{{\mathcal M}_{\boldsymbol{X}}}(\boldsymbol{X}^o)$ is the one of minimal distance from $\boldsymbol{X}^o$ (in the norm of $\mathcal{H}$), is unique and \,\, ($\boldsymbol{X}^o - \hat {\boldsymbol{X}}$)\,\, is orthogonal to all elements of ${\mathcal{M}_{\boldsymbol{X}}}$.\\

If on the other hand the model is nonlinear, \textit{e.g.} $     \boldsymbol{X}(\theta) = \sin( \boldsymbol{U} \boldsymbol{ \theta})$ \,, then ${\mathcal{M}_{\boldsymbol{X}}}$ is not in general convex, much less a subspace of $\mathcal{H}$. In our example in fact ${\mathcal{M}_{\boldsymbol{X}}} :=\{\boldsymbol{X}(\boldsymbol{\theta})\} $ is not convex since for $\boldsymbol{\theta}^1, \boldsymbol{\theta}^2 \in \boldsymbol{\Theta}, \boldsymbol{\theta}^1 \neq \boldsymbol{\theta}^2 $ we have $\boldsymbol{X}(\theta^1) + \boldsymbol{X}(\theta^2) = \sin(\boldsymbol{U} \boldsymbol{\theta}^1) + \sin(\boldsymbol{U} \boldsymbol{\theta}^2) \notin {\mathcal{M}_{\boldsymbol{X}}}$. Hence ${\mathcal{M}_{\boldsymbol{X}}}$ is not a vector space, not a subspace of the Hilbert space $\mathcal H =  L^2(\Omega,\mathcal{F}, P)$, no projection operator $\mathcal P_{{\mathcal M}_{\boldsymbol{X}}}$ can be defined and no unique ``best'' estimate $\hat {\boldsymbol{X}}$ can be obtained.\\

However, assume $\exists$ open $ \, \mathcal{U} \subset {\mathcal{M}_{\boldsymbol{X}}}$ and $ \hat{\boldsymbol{X}} = \boldsymbol{X}(\hat{\boldsymbol{\theta}})\in\mathcal{U}$ such that  $\forall \boldsymbol{X}\in\mathcal{U}$, $\boldsymbol{X} \neq \hat {\boldsymbol{X}}$ it is  $\mathrm{d}({\boldsymbol{X}}^o,\hat{\boldsymbol{X}})<\mathrm{d}(\boldsymbol{X}^o,\boldsymbol{X})$, \textit{i.e.} assume there exists an isolated point $\hat {\boldsymbol{X}} \in {\mathcal{M}_{\boldsymbol{X}}}$  locally closest to ${\boldsymbol{X}}^o$. Under this hypothesis, that a random variable $\hat {\boldsymbol{X}}$ exists, of minimal distance from $\boldsymbol{X}^o$ within a neighborhood (in the topology on $\mathcal{H}$ induced by the norm), then we could consider working on a local linear approximation to the manifold ${\mathcal{M}_{\boldsymbol{X}}}$ at $\hat {\boldsymbol{X}}$, that is we could consider constructing the tangent space to the manifold ${\mathcal{M}_{\boldsymbol{X}}}$ at that locally optimal point. To do so, we need to use the concept of the tangent space to a generic topological manifold at one of its points. Any good introductory text in differential geometry\cite{Tu2010, Lee2009, Lang1995} will provide the relevant definitions and proofs.

A clarification is needed here: when speaking of a ``tangent space'' to a manifold in a differential-geometric context what is usually meant is a suitable vector space (of differential operators, see below and Appendix \ref{sec:AffineTangentToM}), which we will call in the present instance \underline{direction space} and denote by $T_{\hat{\boldsymbol{X}}}{\mathcal{M}_{\boldsymbol{X}}}$. However, we are interested here in the affine ``space'' of random variables $\mathcal{T}_{\hat{\boldsymbol{X}}} = \hat{\boldsymbol{X}} + T_{\hat{\boldsymbol{X}}}{\mathcal{M}_{\boldsymbol{X}}}$\, locally approximating $\mathcal{M}_{\boldsymbol{X}}$ in a neighborhood of $\hat{\boldsymbol{X}}$: we will thus call $\mathcal{T}_{\hat{\boldsymbol{X}}}$ the \underline{tangent space}, even though, clearly, it does not have in general the structure of a vector space.

Now, the Hilbert projection theorem extends naturally to affine subspaces by translation, see Appendix \ref{sec:HilbertAffine} for details: here we will directly deal with the local approximation  $\mathcal{T}_{\hat{\boldsymbol{X}}}$ to ${\mathcal{M}_{\boldsymbol{X}}}$ at $\hat {\boldsymbol{X}}$, calling it, with a slight abuse of language, a ``subspace'' of $\mathcal{H}$, see Fig.\ref{fig:NonLinearHilbertReparametrized}.

We can build the direction space $T_{\hat {\boldsymbol{X}}}{\mathcal{M}_{\boldsymbol{X}}}$ by considering the set of velocities, \textit{i.e.} of differential operators, which map functions in $C^{\infty}({\mathcal{M}_{\boldsymbol{X}}})$ , $f:{\mathcal{M}_{\boldsymbol{X}}}\rightarrow\mathbb{R}$,\,  to their directional derivatives in the direction of curves $\gamma:\mathbb{R}\rightarrow {\mathcal{M}_{\boldsymbol{X}}}$, \textit{i.e.}:\\ $T_{\hat {\boldsymbol{X}}}{\mathcal{M}_{\boldsymbol{X}}} = \{\vartheta_{\gamma,\hat {\boldsymbol{X}}}$ such that $\, \hat {\boldsymbol{X}} =\gamma(\lambda)\in \mathcal{U}\subset {\mathcal{M}_{\boldsymbol{X}}}$ and $\forall f \in C^{\infty}({\mathcal{M}_{\boldsymbol{X}}}) $ it is $\, \vartheta_{\gamma,\hat {\boldsymbol{X}}} f = (f\circ\gamma)'|_{\lambda}\}$.\\ Notice that here the curve $\gamma$ is just one representative of a family of equivalent curves $\{\gamma^\bullet : \gamma^\bullet(\lambda^\bullet) = \gamma(\lambda) = \hat {\boldsymbol{X}}, (f\circ\gamma^\bullet)'|_{\lambda^\bullet} =(f\circ\gamma)'|_{\lambda}\}$.

It is now necessary both to establish a basis for $T_{\hat {\boldsymbol{X}}}{\mathcal{M}_{\boldsymbol{X}}}$ and to clarify the correspondence between the differential operators in $T_{\hat {\boldsymbol{X}}}{\mathcal{M}_{\boldsymbol{X}}}$ and the random variables that ought to belong to $\mathcal{T}_{\hat{\boldsymbol{X}}}$.

We can do this using the ``canonical'' chart from the manifold ${\mathcal{M}_{\boldsymbol{X}}}$ to $\mathbb{R}^q$ induced by the parametrization of the model. We can in general define a chart-induced basis of a tangent space: in the present case we will compute the basis for $T_{\hat {\boldsymbol{X}}}{\mathcal{M}_{\boldsymbol{X}}}$ induced by the \underline{canonical chart} $\boldsymbol{X}^{-1}$, similarly to the canonical chart map from $\mathbb{R}^n$ to $\mathbb{R}^q$ already introduced at line \ref{canonicalchartmap1} of page \pageref{canonicalchartmap1}: \\

%\begin{minipage}{.30\textwidth}
\begin{center}
    \begin{tikzpicture}
        \begin{scope}[every node/.style={draw=white!60,thick,draw}]
            \node (A) at (0,0) {${\mathcal{M}_{\boldsymbol{X}}}$};
            \node (B) at (0,-2) {$\mathbb{R}^q$};
        \end{scope}

        \begin{scope}[>={Stealth[black]},
            every node/.style={},
            every edge/.style={draw=black}]
            \path [->] (A) edge [bend right=50] [left] node {$\boldsymbol{X}^{-1}$} (B);
            \path [->] (B) edge [bend right=50][right] node {$\boldsymbol{X}$} (A);
        \end{scope}
    \end{tikzpicture}
    %\caption{Composition of maps}
    %\label{fig:Fig001}
    %\end{minipage}%

\end{center}

Clearly we must require sufficient regularity in the model map $\boldsymbol{X}$ to do this, such as being continuously differentiable with a continuously differentiable inverse.
Notice that $\boldsymbol{X}^{-1}$ is a chart, indeed a global chart, mapping the whole of ${\mathcal{M}_{\boldsymbol{X}}}$ to $\mathbb{R}^q$, since ${\mathcal{M}_{\boldsymbol{X}}}$ is generated by $\boldsymbol{X}(\boldsymbol{\Theta})$, with $\boldsymbol{\Theta} \in \mathbb{R}^q$, because by definition for every admissible $\boldsymbol{\theta}$ there is a point $\boldsymbol{X}(\boldsymbol{ \theta}) \in {\mathcal{M}_{\boldsymbol{X}}}$. \\
We can also say, therefore that:

\begin{lemma}
    The inverse of a model map is a chart map.

\end{lemma}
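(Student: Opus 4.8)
The plan is to verify directly that $\boldsymbol{X}^{-1}$ satisfies the defining properties of a chart on the topological manifold $\mathcal{M}_{\boldsymbol{X}}$ equipped with the subspace topology inherited from $\mathcal{H}$. Recall that a chart on a $q$-dimensional manifold $\mathcal{M}$ is a pair $(\mathcal{W},\varphi)$ with $\mathcal{W} \subseteq \mathcal{M}$ open and $\varphi:\mathcal{W} \to \varphi(\mathcal{W})$ a homeomorphism onto an open subset of $\mathbb{R}^q$; it is a \emph{global} chart when $\mathcal{W} = \mathcal{M}$. So three things must be checked: (i) $\boldsymbol{X}^{-1}$ is well defined on all of $\mathcal{M}_{\boldsymbol{X}}$; (ii) its image is open in $\mathbb{R}^q$; (iii) it is a homeomorphism.

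For (i): by the assumed regularity $\boldsymbol{X}$ is invertible, hence injective, so every $\boldsymbol{X} \in \mathcal{M}_{\boldsymbol{X}}$ has a unique preimage $\boldsymbol{\theta} = \boldsymbol{X}^{-1}(\boldsymbol{X})$; moreover, since by construction $\mathcal{M}_{\boldsymbol{X}} = \{\boldsymbol{X}(\boldsymbol{\theta}): \boldsymbol{\theta} \in \boldsymbol{\Theta}\}$ is exactly the image of $\boldsymbol{X}$, the domain of $\boldsymbol{X}^{-1}$ is the whole manifold --- which is precisely the assertion that it is a \emph{global} chart. For (ii): the image $\boldsymbol{X}^{-1}(\mathcal{M}_{\boldsymbol{X}}) = \boldsymbol{\Theta}$ is taken to be open in $\mathbb{R}^q$ (or one passes to its interior, shrinking $\mathcal{M}_{\boldsymbol{X}}$ accordingly), as is anyway required of the admissible parameter set for the tangent-space construction. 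For (iii): $\boldsymbol{X}$ is continuous because it is continuously differentiable, and $\boldsymbol{X}^{-1}$ is continuous because it too was assumed continuously differentiable; hence $\boldsymbol{X}^{-1}$ is a continuous bijection with continuous inverse between $\mathcal{M}_{\boldsymbol{X}}$ (subspace topology) and $\boldsymbol{\Theta} \subset \mathbb{R}^q$, i.e. a homeomorphism. Combining (i)--(iii) gives the lemma.

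The one point that carries the real content --- and that the regularity hypotheses stated just above the lemma are tailored to finesse --- is that the subspace topology on $\mathcal{M}_{\boldsymbol{X}} \subset \mathcal{H}$ must agree with the manifold topology transported from $\mathbb{R}^q$ by $\boldsymbol{X}$; equivalently, $\boldsymbol{X}$ must be a topological embedding, not merely a continuous injective immersion. Into an infinite-dimensional space a continuous injection need not be a homeomorphism onto its image, so this is not automatic; demanding that $\boldsymbol{X}^{-1}$ be continuous (equivalently, continuously differentiable) on $\mathcal{M}_{\boldsymbol{X}}$ is exactly what excludes the pathological cases and turns $\boldsymbol{X}^{-1}$ into a genuine chart. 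One could instead try to deduce continuity of $\boldsymbol{X}^{-1}$ from injectivity of the (Gateaux) derivative of $\boldsymbol{X}$ via an inverse-function argument, but since the regularity of $\boldsymbol{X}^{-1}$ has already been posited in the setup, the cleanest route is simply to record that these hypotheses deliver a homeomorphism, and hence a global chart map.
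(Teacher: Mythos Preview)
Your argument is correct and matches the paper's approach: the paper likewise invokes the assumed regularity (continuously differentiable with continuously differentiable inverse) and the fact that $\mathcal{M}_{\boldsymbol{X}}$ is by construction the image $\boldsymbol{X}(\boldsymbol{\Theta})$ to conclude that $\boldsymbol{X}^{-1}$ is a global chart. Your treatment is considerably more explicit---in particular your discussion of the embedding issue and the role of continuity of $\boldsymbol{X}^{-1}$ in ensuring the subspace topology agrees with the transported one---whereas the paper dispatches the lemma in a single sentence of justification preceding the statement.
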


Summarizing what is more precisely detailed in Appendix \ref{sec:AffineTangentToM}, we can thus use as basis of the tangent space $T_{\hat{\boldsymbol{X}}}{{\mathcal{M}_{\boldsymbol{X}}}}$ the basis induced by the chart map $(\boldsymbol{X}^{-1})$:

\begin{equation*}
    \{(\cdot \circ ((\boldsymbol{X}^{-1})^{-1})^j)'|_{\hat{\boldsymbol{X}}}\}_{j=1,...,q}=\{(\cdot \circ \boldsymbol{X}^j)'|_{\hat{\boldsymbol{X}}}\}_{j=1,...,q} \,\,.
\end{equation*}

Denoting  $\dot{\boldsymbol{X}} := \boldsymbol{X}'$ and writing $(\boldsymbol{X}^j)'=(\dot{\boldsymbol{X}})^j=\dot{\boldsymbol{X}}^j$ \,,\,\, the tangent space is the span  $T_{\hat{\boldsymbol{X}}}{{\mathcal{M}_{\boldsymbol{X}}}} = {\mathcal{S}}\{ \dot{\boldsymbol{X}}^j\}_{j=1,...,q}$ \,\, WRT the canonical model chart map $(\boldsymbol{X}^{-1})$. This means that for any ${\boldsymbol{\eta}}\in\mathbb{R}^q$, $ \dot{\boldsymbol{X}} \cdot {\boldsymbol{\eta}}\in T_{\hat{\boldsymbol{X}}}{\mathcal{M}_{\boldsymbol{X}}}$ and that $\mathcal{T}_{\hat{{\boldsymbol{X}}}} =   \hat{\boldsymbol{X}} + T_{\hat{\boldsymbol{X}}}{\mathcal{M}_{\boldsymbol{X}}} = \hat{\boldsymbol{X}} + \{ \dot{\boldsymbol{X}} \cdot {\boldsymbol{\eta}}\,,\,\,{\boldsymbol{\eta}}\in\mathbb{R}^q \}\subset \mathcal{H}$ is an affine space of  random variables tangent to the model manifold ${\mathcal{M}_{\boldsymbol{X}}}$ at $\hat{\boldsymbol{X}}$.\\

An affine subspace $\mathcal{T}_{\hat{{\boldsymbol{X}}}}$ of random variables has thus been constructed using the model-induced canonical chart map $\boldsymbol{X}^{-1}$. The Hilbert projection theorem applies and the random variable $\boldsymbol{X}^o$ has a unique projection onto $\hat{\boldsymbol{X}} \in \mathcal{T}_{\hat{{\boldsymbol{X}}}}$ , of global minimal distance from $\mathcal{T}_{\hat{{\boldsymbol{X}}}}$ in the metric induced by the norm of $\mathcal{H}$, such that $\forall \, \boldsymbol{X} \in \mathcal{T}_{\hat{{\boldsymbol{X}}}}$ we have $(\boldsymbol{X}^o - \hat{\boldsymbol{X}}) \, \perp (\boldsymbol{X} - \hat{\boldsymbol{X}})$.

\subsection{\underline{Mapping $\mathcal{H}$ to $\mathbb{R}^n$}}

It is now possible to formalize the intuitively obvious relationship between the theoretical construction of the abstract model in the Hilbert space of finite-variance random variables $\mathcal{H}$ and the actual computational, numerical analysis of the results carried out with statistical techniques in case space $\mathbb{R}^n$.

Recall that a random variable is defined as a measurable function from   $(\Omega,\mathcal{F},P)$ to $\mathbb{R}^n$, mapping outcomes to (vectors of) reals, say

\begin{equation*}
	{\boldsymbol{Y}} \in \mathcal H \quad  \Rightarrow  \quad {\boldsymbol{Y}}:\Omega \rightarrow \mathbb{R}^n\,\,,\,\,\,{\boldsymbol{Y}}: \omega \mapsto r\,.
\end{equation*}

In fact, the act of sampling some variable on some experimental unit (a person from a population of interest, the roll of a die, ...) can be seen as a map from the product space $\Omega \times {\mathcal{H}}$ to $\mathbb{R}^n$, of which ${\boldsymbol{Y}}:\Omega \rightarrow \mathbb{R}^n$ is the marginal map obtained by fixing ${\boldsymbol{Y}} \in {\mathcal{H}}$. In the same way we may consider the other marginal map, obtained by fixing $\omega \in \Omega$: we can thus consider the outcome  as a map from $\mathcal{H}$ to $\mathbb{R}^n$:

\begin{equation*}
	\omega \in \Omega \quad  \Rightarrow  \quad \omega: \mathcal H \rightarrow \mathbb{R}^n\,\,,\,\,\,\omega:{\boldsymbol{Y}} \mapsto  r   \,.
\end{equation*}

In other words, we may define

\begin{equation}
    \omega({\boldsymbol{Y}}) := {\boldsymbol{Y}}(\omega) = \boldsymbol{r} \in \mathbb{R}^n
\end{equation}

Experimental sampling is the concrete act of picking an outcome $\omega$ from the sample space $\Omega$. This act fixes $\omega$ and thus determines the specific realizations $\boldsymbol{x}^o, \, \boldsymbol{y}^o, \dots$ of any desired random variables $\boldsymbol{X}^o, \boldsymbol{Y}^o, \dots$. As a consequence and within the context of our modeling discussion, the sampling act also determines the realization $\hat{\boldsymbol{x}}$ of $\hat{\boldsymbol{X}}$, and so on. All random variables considered in the theoretical construction of the model are thus naturally mapped by $\omega$ to their respective numerical realizations. We see then that the event $\omega$ can be interpreted as the necessary map from $\mathcal H$ to $\mathbb R^n$, see Fig.\ref{fig:SamplingAsMap}.

\begin{figure}
    \includegraphics[height=7cm]{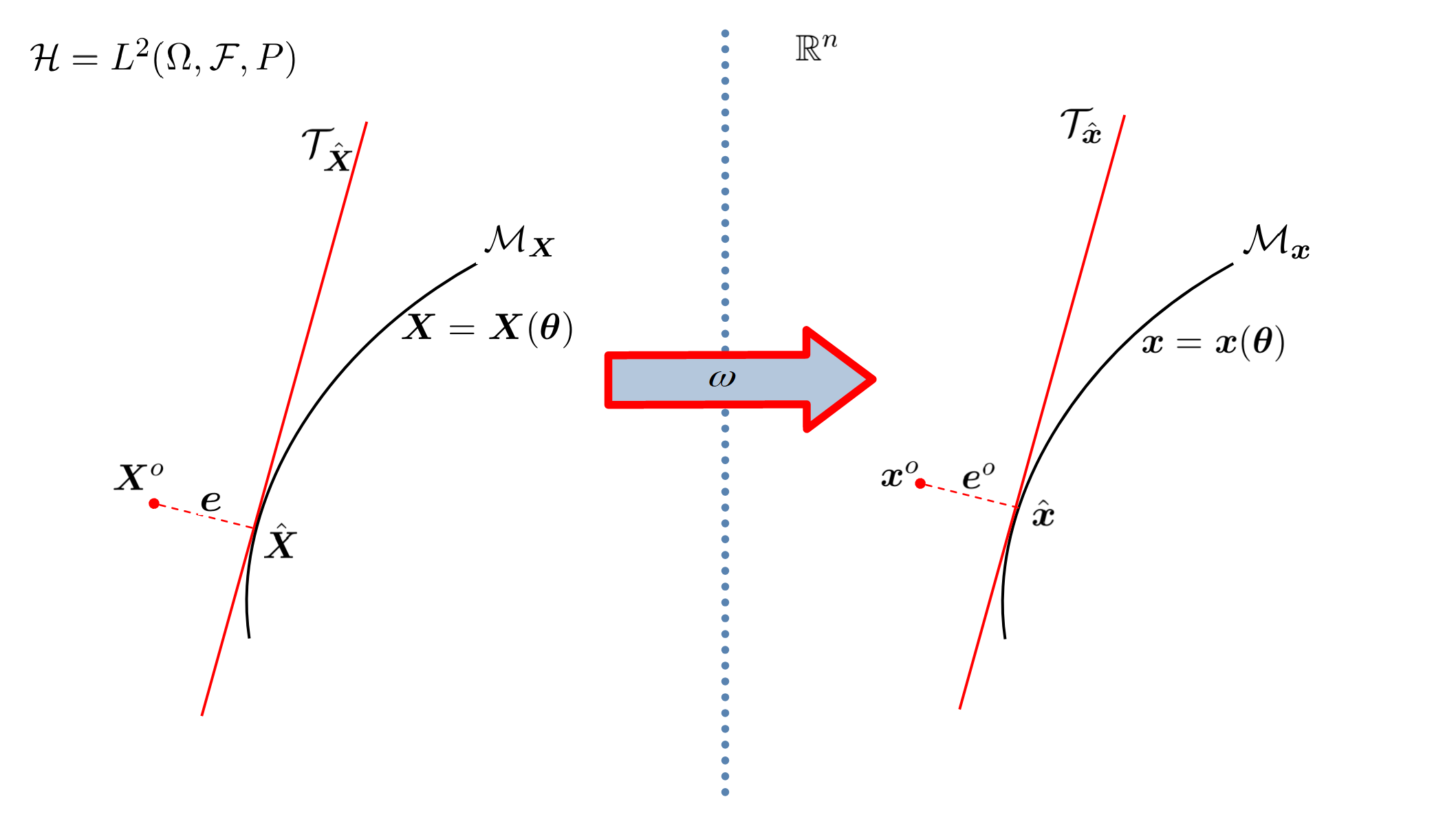}
    \caption{Sampling as a map from $\mathcal{H}$ to $\mathbb{R}^n$.}
    \label{fig:SamplingAsMap}
\end{figure}

\subsection{Summary}

We recap at this point the different elements of the twin structures (in $\mathcal{H}$ and in $\mathbb{R}^n$) that we have described. While the use of many of these elements corresponds to standard statistical practice, the following list has the purpose of systematizing the notation to dispel possible sources of confusion, as well as of clarifying the correspondence of objects in the abstract modeling and the concrete computational settings.

\begin{itemize}

    \item We are interested in predicting the values of some objective random variable $\boldsymbol X^o \in \mathcal{H}$;\\

    \item to do so we formulate a conceptual model as the operator $\boldsymbol X=\boldsymbol X(\boldsymbol \theta, \boldsymbol U)$, yielding a random variable $\boldsymbol X \in \mathcal{H}$ for any given value of $\boldsymbol \theta \in \mathbb{R}^q$. We hope to find some optimal value of the parameter $\boldsymbol  \theta$ such that the random variable $\boldsymbol X$  is close to the objective random variable $\boldsymbol X^o$, this would give us a ``good'' model.\\

    \item $\boldsymbol x^o$ are measurements of some quantity of interest, $\boldsymbol x^o \in \mathbb{R}^n$. We approximate $\boldsymbol x^o$ by means of some other quantity $\boldsymbol x =\boldsymbol x(\boldsymbol \theta, \boldsymbol u) = \boldsymbol x(\boldsymbol \theta)$ computed as a function of some parameter $\boldsymbol \theta \in \Theta \subset \mathbb{R}^q$ (and possibly of some predictor values $\boldsymbol u$): the fundamental idea is that we obtain a forecast $\boldsymbol x$ for each arbitrary value of $\boldsymbol \theta$.\\
    Notice that we distinguish between a model, \textit{i.e.} a model operator $\boldsymbol{X}(\boldsymbol{\theta}, \boldsymbol{U}) : \mathbb{R}^q \times \mathcal{H} \rightarrow \mathcal{H}$, and a model function $\boldsymbol{x}(\boldsymbol{\theta}, \boldsymbol{u}) : \mathbb{R}^q \times \mathbb{R}^n \rightarrow \mathbb{R}^n$, even though their functional form is the same.\\ 
    For convenience, disregarding any predictor variable(s) we also define the model map $\boldsymbol{X}(\boldsymbol{\theta}): \mathbb{R}^q \rightarrow \mathcal{H}$, which is the inverse of the model-induced (global) canonical chart $\boldsymbol{X}^{-1}: \mathcal{H} \rightarrow \mathbb{R}^q$.\\
    
    \item So $\boldsymbol \theta$ is some arbitrary value of the parameter; we believe $\boldsymbol \theta^*$ to be the "true" but unknown value of the parameter, to which therefore corresponds a true but unknown $\boldsymbol x^* = \boldsymbol x(\boldsymbol \theta^*)$.\\

    \item When making an experiment, we observe the value $\boldsymbol x^o$, which in general cannot be produced exactly by the function $\boldsymbol x(\boldsymbol \theta)$ for any value of $\boldsymbol \theta$; since we have assumed that the relationship is true (with $\boldsymbol x^* = \boldsymbol x(\boldsymbol \theta^*)$), we deduce therefore that some (observation) error has been made in measuring $\boldsymbol x^o$.\\

    \item We denote this error by $\boldsymbol \varepsilon^o = \boldsymbol x^o - \boldsymbol x^*$ (even though $\boldsymbol \varepsilon^o$ is not actually observed): we thus understand the observation error that one might make as some random variable $\boldsymbol \varepsilon$, of which $\boldsymbol \varepsilon^o$ is the specific realization obtained in our experiment.\\
    %    {\footnotesize It is unfortunate that a better notation (say, capital letter) is not available to indicate the random variable $\boldsymbol \varepsilon$, which also does not seem to be made very \textbf{bold} by LaTeX :(   }\\

    \item We indicate with $\boldsymbol X^o$ the random variable $\boldsymbol X^o = \boldsymbol X(\boldsymbol \theta^*) + \boldsymbol \varepsilon$; $\boldsymbol x^o$ is the observed realization of $\boldsymbol X^o$, $\boldsymbol x^o = \boldsymbol x(\boldsymbol \theta^*)+\boldsymbol \varepsilon^o$, where both $\boldsymbol \theta^*$ and $\boldsymbol \varepsilon^o$ are however unknown.\\

    \item The predictors $\boldsymbol u$, which we observe, may be regarded as the sample realization of some (design) random variable $\boldsymbol U$ \,\,(possibly having zero variance, \textit{e.g.} where fixed by design), hence $\boldsymbol X = \boldsymbol X(\boldsymbol \theta) = \boldsymbol X(\boldsymbol \theta, \boldsymbol U)$, for any given $\boldsymbol \theta$, is also a random variable.\\

    \item In order to assess the likely value of the unknown parameter $\boldsymbol \theta^*$ we introduce an estimator $\hat{\boldsymbol \theta}$, which is a function of the random variable $\boldsymbol X^o$ and hence a random variable itself, $\hat{\boldsymbol \theta} = \hat{\boldsymbol \theta}(\boldsymbol  X^o)$. The statistic $\hat{\boldsymbol \theta}^o = \hat{\boldsymbol \theta}(\boldsymbol  x^o)$ is the observed realization of $\hat{\boldsymbol \theta}$. \\

    \item $\hat{\boldsymbol X} = \boldsymbol X(\hat{\boldsymbol \theta})  = \boldsymbol X(\hat{\boldsymbol \theta}(\boldsymbol X^o)) $ is the estimator of $\boldsymbol X^o$ assuming the model $\boldsymbol X=\boldsymbol X(\boldsymbol \theta) =\boldsymbol X(\boldsymbol \theta, \boldsymbol U)$ to be true. Being a function of a random variable, it is of course a random variable itself; its observed realization is $\hat{\boldsymbol x}^o = \boldsymbol x(\hat{\boldsymbol \theta}^o)$. \\
    
    \item $\boldsymbol e$ is the random variable $\boldsymbol e = \boldsymbol X^o - \boldsymbol X = \boldsymbol X^o - \boldsymbol X(\boldsymbol \theta)$ corresponding to some arbitrary value of $\boldsymbol \theta$; $\hat{\boldsymbol e}$ is the random variable $\hat{\boldsymbol e} = \boldsymbol X^o - \hat{\boldsymbol X} = \boldsymbol X^o - \boldsymbol X(\hat{\boldsymbol \theta})$ corresponding to the estimator $\hat{\boldsymbol \theta}$; $\boldsymbol e^o$ and $\hat{\boldsymbol e}^o$ are observed realizations of the respective random variables corresponding to arbitrary $\boldsymbol \theta$ or estimated $\hat{\boldsymbol \theta}^o$.\\

    \item $\boldsymbol \xi$, the ``\underline{flaw}" (term introduced above at line \lineref{ref:theflaw}, page \pageref{ref:theflaw}), is the random variable given by $\boldsymbol \xi (\boldsymbol \theta)  = \boldsymbol X - \boldsymbol X^* = \boldsymbol X(\boldsymbol \theta, \boldsymbol U) - \boldsymbol X(\boldsymbol \theta^*, \boldsymbol U)$; we call the random variable $\hat{\boldsymbol \xi} =  \boldsymbol \xi (\hat{\boldsymbol \theta})  = \hat{\boldsymbol X} - \boldsymbol X^* $  the flaw estimator; the flaw corresponding to the observed sample is of course $\hat{\boldsymbol \xi}^o =  \boldsymbol \xi (\hat{\boldsymbol \theta}^o)  = \hat{\boldsymbol x}^o - \boldsymbol x^* $, it evidently cannot be observed.\\

    \item ${\mathcal{M}_{\boldsymbol{X}}} = \{ \boldsymbol{X} = \boldsymbol{X}(\boldsymbol{\theta}, \boldsymbol{U}), \boldsymbol{\theta} \in \Theta\}$ is the model manifold, the set of all possible random variables $\boldsymbol{X}$ generated by the model operator. Having observed or set $\boldsymbol{u}$, the prediction surface  ${\mathcal{M}_{\boldsymbol{x}}} = \{ \boldsymbol{x} = \boldsymbol{x} (\boldsymbol{\theta},\boldsymbol{u}), \boldsymbol{\theta} \in \Theta \}$ is the set of all possible values $\boldsymbol{x} \in \mathbb{R}^n$ which can be generated by the model function $\boldsymbol{x} (\boldsymbol{\theta},\boldsymbol{u})$.\\
    
    \item $T_{\hat{\boldsymbol{X}}}{{\mathcal{M}_{\boldsymbol{X}}}}$ is the direction space (the tangent space in differential-geometric parlance), the vector space of differential operators defined at $\hat{\boldsymbol{X}}$ with respect to the canonical chart map $\boldsymbol{X}^{-1}$, which is the inverse of the model map $\boldsymbol{X}({\boldsymbol{\theta}})$.\\
    
    \item $\mathcal{T}_{\hat{{\boldsymbol{X}}}} =   \hat{\boldsymbol{X}} + T_{\hat{\boldsymbol{X}}}{\mathcal{M}_{\boldsymbol{X}}} = \hat{\boldsymbol{X}} + \{ \dot{\boldsymbol{X}} \cdot {\boldsymbol{\eta}}\,,\,\,{\boldsymbol{\eta}}\in\mathbb{R}^q \}\subset \mathcal{H}$ is the affine space of  random variables tangent to the model manifold ${\mathcal{M}_{\boldsymbol{X}}}$ at $\hat{\boldsymbol{X}}$.\\

    \item Notice that $\boldsymbol{x}^* = \boldsymbol{x}(\boldsymbol{\theta}^*,\boldsymbol{u})$ for fixed $\boldsymbol{u}$ is not the realization of a random variable, but it would be if $\boldsymbol{u}$ were a realization of the random variable $\boldsymbol{U}$. In other words, if the design $\boldsymbol{U}$ is fixed, then $\boldsymbol{u} \equiv \boldsymbol{U}$ deterministic and $\boldsymbol{x}^*$ is also fixed (but unknown). More generally we could consider $\boldsymbol{x}^*$ to be the realization of the random variable $\boldsymbol{X}^* = \boldsymbol{x}(\boldsymbol{\theta}^*,\boldsymbol{U})$, possibly having zero variance.
    
	\item Sometimes we will need to generate samples of artificial realizations of $\boldsymbol X$, which we would denote by $\tilde{\boldsymbol x}^r, r=1, 2, \dots$.\\

\end{itemize}

%%%%%%%%%%%%%%%%%%%%%%%%%%%%%%%%%%%%%%%%%%
%\section{Results}

%%%%%%%%%%%%%%%%%%%%%%%%%%%%%%%%%%%%%%%%%%
\section{Discussion and Conclusions}

\noindent Recent literature addresses similar problems (in particular the characterization of the projection) in the case of conditional expectation \cite{Bobrowski2013} and in the case of information geometry \cite{BarndorffNielsen1986, Fletcher2020, Brigo2011,Nielsen2020, Amari2009}.

\subsection{Relationship with conditional estimation}

As already mentioned in the Introduction, much work has been done on the Hilbert space of all finite-variance random variables on a probability space $(\Omega,\mathcal{F},P)$, \textit{i.e.} on all $\mathcal{F}$-measurable functions on $\Omega$. Estimation of a random variable $\boldsymbol{X}^o \in \mathcal{H}$ conditional on a sub-$\sigma$-algebra $\mathcal{F}_\mathcal{G} \subset \mathcal{F}$, \textit{i.e.} assuming a lesser amount of information than that provided by the full $\sigma$-algebra $\mathcal{F}$, is shown to consist in the projection of the original random variable $\boldsymbol{X}^o$ onto the subspace $\mathcal{G} \subset \mathcal{H}$ of $\mathcal{F_G}$-measurable random variables.
In this case we do have a (proper) subspace, the Hilbert projection theorem holds and the projection of $\boldsymbol{X}^o \in \mathcal{H}$ onto $\mathcal{G}$ is the conditional estimate $\hat{\boldsymbol{X}}$ (conditional on $\mathcal{F_G}$).\\

Notice however that the situation described in the present work is different: even in the case of linearity, the subspace of interest is generated by a functional form of the model, linear with respect to the parameter values, not by the restriction of information on the outcome characterized by an appropriate sub-$\sigma$-algebra. The difference between the two situations is obvious when considering a nonlinear functional form of the model.

\subsection{Relationship with Information Geometry}

The well-developed field of Information Geometry deals with spaces of distributions, wherein certain (parametrized) families are embedded manifolds. Interesting properties can be derived for some such families (\textit{e.g.} the exponential family generates a ``flat'' manifold in the appropriate Riemannian metric tensor because, under the Fisher information metric and exponential connection, the parameter space forms an affine, curvature-free geometry).\\

Interesting facts emerge from the study of such spaces of distributions. For example, denote by $\mathcal{P}$ the space of probability distributions, and denote by $\mathcal{G} \subset \mathcal{P}$ the manifold representing a family of probability distributions indexed by some parameter $\boldsymbol{\theta} \in \boldsymbol{\Theta} \subset \mathbb{R}^q$, $G(\boldsymbol{\theta}) \in \mathcal{G}$, which family we wish to use as a possible ``model'' for our data-generating process. We assume that the data are actually generated by an unknown distribution $P \in \mathcal{P}$, with, in general, $P \notin \mathcal{G}$. If we observe data $\boldsymbol{x} = \{ x_1, ..., x_n\}$ we can build an empirical distribution $\tilde{P}$ out of them. We could then hypothesize that the ``best'' value of $\boldsymbol{\theta}$ is the value $\hat{\boldsymbol{\theta}}$ that minimizes some kind of distance or divergence from $\tilde{P}$ to $G(\boldsymbol{\theta})$: in the case the Kullback-Leibler divergence $D_{KL}$ of  $G(\boldsymbol{\theta})$ from $\tilde{P}$ is considered, we obtain the Maximum Likelihood Estimator $\hat {\boldsymbol{\theta}}_{ML}$, and if furthermore the family ${\mathcal{G}}$ is flat (\textit{e.g.} ${\mathcal{G}}$ is the exponential family) then estimating by MLE corresponds to projecting $\tilde{P}$ onto ${\mathcal{G}}$.\\

Now, there is a (biunivocal $P$-almost-everywhere) relationship between random variables on $(\Omega,\mathcal{F},P)$ and their distributions, in the sense that each random variable induces a distribution and conversely we can define a random variable from a distribution, two random variables with the same distribution being equal except on $P$-null sets. In other words, when we model a physical phenomenon we are working in $\mathcal{H}$, and when we work on the properties of estimators we use results established in $\mathcal{P}$, but there is an almost-everywhere 1-1 correspondence between the two.

Specifically, in information geometry a set of probability distributions indexed by a finite number of parameters is shown to be a (smooth) manifold and may be equipped with a Riemannian structure and with conjugate connections, which make the manifold `flat', thus generalizing the concept of Euclidean Space \cite{Nielsen2020,BarndorffNielsen1986}. This flat structure can be obtained in different ways, some of which are discussed in the papers of Amari et al. and Nielsen \cite{Amari2009, Nielsen2020, BarndorffNielsen1986,Amari2010}. One way of doing this is through Bregman divergence, which induces a structure on a manifold ${\mathcal{G}}$ that generalizes the Pythagorean Theorem and consequently guarantees the existence and uniqueness of the projection of a point $\tilde P$ onto the manifold ${\mathcal{G}}$ \cite{Amari2009, Nielsen2020}. A similar approach to Amari (also formulating a projection theorem) is proposed by Brigo \cite{Brigo2011} using a direct $\mathcal{L}^2$ metric.\\

It can therefore be seen that, as stated above, the theory of conditional expectation as a projection onto an appropriate subspace of the Hilbert space $\mathcal{H}$ of random variables is not related to the present discussion, because the underlying manifold is different. There is instead a clear connection with information geometry: to the best of the author's knowledge, however, the identification of the underlying manifold as determined by the mathematical model of the experiment, as well as the identification of the map from the space $\mathcal{H}$ (where the modeling is abstractly done) to the space $\mathbb{R}^n$ (where the computations are carried out) have not yet been formalized.\\

\section{Conclusions}

A mathematical model can properly be defined as an operator yielding a random variable $\boldsymbol{X}$ approximating the objective random variable $\boldsymbol{X}^o$ in $\mathcal{H}$. The outcome $\omega$ maps $\mathcal{H}$ onto case space $\mathbb{R}^n$. Estimation for linear models in case space consists in projecting the observed value $\boldsymbol{x}^o$ onto the prediction surface ${\mathcal{M}_{\boldsymbol{x}}}$, and corresponds via $\omega$ to projecting $\boldsymbol{X}^o$ onto the model manifold ${\mathcal{M}_{\boldsymbol{X}}}$. For nonlinear models a local optimum $\hat {\boldsymbol{x}} \in {\mathcal{M}_{\boldsymbol{x}}}$ can be found numerically, and is the image under $\omega$ of a locally optimal random variable $\hat{\boldsymbol{X}} \in {\mathcal{M}_{\boldsymbol{X}}}$. Asymptotic confidence regions can be computed on ${\mathcal{M}_{\boldsymbol{x}}}$ for linear models. For nonlinear models approximate asymptotic confidence regions can be computed on the affine tangent space $\mathcal{T}_{\hat {\boldsymbol{x}}}$ (constructed via computation of the Jacobian at the local optimum), which corresponds to the theoretical tangent space $\mathcal{T}_{\hat {\boldsymbol{X}}} \subset \mathcal{H}$.

%%%%%%%%%%%%%%%%%%%%%%%%%%%%%%%%%%%%%%%%%%
\section{Appendices}

\subsection{Extension of Hilbert's theorem to affine spaces}
\label{sec:HilbertAffine}

The classical Hilbert projection theorem states that if $\mathcal{H}$ is a Hilbert space and $\mathcal{V} \subset \mathcal{H}$ is a nonempty closed convex subset, then for every $x \in \mathcal{H}$, there exists a unique point $p \in \mathcal{V}$ such that
\begin{equation}
    \|x - p\| = \inf_{y \in \mathcal{V}} \|x - y\|.
\end{equation}
In particular, when $\mathcal{V}$ is a closed linear subspace, the projection $p$ satisfies
\begin{equation*}
    x - p \perp \mathcal{V} .
\end{equation*}

Now, let $\mathcal{H}$ be a Hilbert space, and let ${\mathcal{A}}$ be an affine subspace, meaning there exists a closed linear subspace $\mathcal{V}$ and a fixed point $a \in \mathcal{H}$ such that
\begin{equation}
    {\mathcal{A}} = a + \mathcal{V} = \{ a + v : v \in \mathcal{V} \}.
\end{equation}

\begin{theorem}
    Let $\mathcal{H}$ be a Hilbert space, $\mathcal{V} \subset \mathcal{H}$ \,\,\,a nonempty closed subspace and ${\mathcal{A}} = a + \mathcal{V}$ a nonempty closed affine subspace. Then for every $x \in \mathcal{H}$, there exists a unique $p \in {\mathcal{A}}$ such that
    \begin{equation*}
        \|x - p\| = \inf_{y \in {\mathcal{A}}} \|x - y\|.
    \end{equation*}
    Moreover,
    \begin{equation*}
        x - p \perp \mathcal{V}.
    \end{equation*}
\end{theorem}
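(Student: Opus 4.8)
The plan is to reduce the affine problem to the classical linear-subspace case by a single translation. Introduce the shift $\tau_{-a}\colon\mathcal{H}\to\mathcal{H}$, $\tau_{-a}(z)=z-a$. Since $\mathcal{A}=a+\mathcal{V}$, we have $\tau_{-a}(\mathcal{A})=\mathcal{V}$, and for any $y\in\mathcal{A}$, writing $y=a+v$ with $v\in\mathcal{V}$, one has $\|x-y\|=\|(x-a)-v\|$. Consequently
\begin{equation*}
    \inf_{y\in\mathcal{A}}\|x-y\|=\inf_{v\in\mathcal{V}}\|(x-a)-v\|,
\end{equation*}
so the affine minimization of $x$ over $\mathcal{A}$ is literally the linear minimization of the shifted point $x-a$ over $\mathcal{V}$.

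Next I would invoke the results already recalled above: $\mathcal{V}$ is a nonempty closed subspace, hence in particular a nonempty closed convex set, so by the projection theorem for such sets there is a unique $v^{*}\in\mathcal{V}$ attaining $\inf_{v\in\mathcal{V}}\|(x-a)-v\|$; and by the Hilbert projection theorem (subspace version) this $v^{*}$ additionally satisfies the orthogonality condition $\langle (x-a)-v^{*},\,z\rangle=0$ for every $z\in\mathcal{V}$, i.e. $(x-a)-v^{*}\perp\mathcal{V}$.

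Then I would set $p:=a+v^{*}\in\mathcal{A}$ and verify the two conclusions directly. Minimality: $\|x-p\|=\|(x-a)-v^{*}\|=\inf_{v\in\mathcal{V}}\|(x-a)-v\|=\inf_{y\in\mathcal{A}}\|x-y\|$ by the displayed identity. Orthogonality: $x-p=(x-a)-v^{*}$, which is already known to be orthogonal to $\mathcal{V}$. Uniqueness: if $p'\in\mathcal{A}$ also attains the infimum, write $p'=a+v'$ with $v'\in\mathcal{V}$; then $v'$ attains $\inf_{v\in\mathcal{V}}\|(x-a)-v\|$, so by uniqueness in the linear case $v'=v^{*}$ and hence $p'=p$.

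The argument is essentially bookkeeping once the translation is in place, and there is no substantive obstacle; the only points deserving care are that $\mathcal{A}$ is nonempty (immediate, as $a\in\mathcal{A}$) and remains closed because $\tau_{-a}$ is a homeomorphism, and — more importantly — that the conclusion $x-p\perp\mathcal{V}$ is orthogonality with respect to the \emph{direction} subspace $\mathcal{V}$, not with respect to $\mathcal{A}$ itself, which need not contain the origin. If one prefers the variational characterization phrased on $\mathcal{A}$, it reads $\langle x-p,\,y-p\rangle=0$ for all $y\in\mathcal{A}$, and this follows at once since $y-p$ ranges over exactly $\mathcal{V}$ as $y$ ranges over $\mathcal{A}$.
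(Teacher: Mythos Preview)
Your proof is correct and follows essentially the same approach as the paper: translate by $-a$ to reduce the affine problem to the classical projection onto the linear subspace $\mathcal{V}$, apply the Hilbert projection theorem there to obtain the unique minimizer $v^{*}$ with $(x-a)-v^{*}\perp\mathcal{V}$, and then shift back via $p=a+v^{*}$. Your version is slightly more explicit in spelling out uniqueness and in noting the equivalent variational form $\langle x-p,\,y-p\rangle=0$ on $\mathcal{A}$, but the argument is the same.
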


\begin{proof}
    Since ${\mathcal{A}}$ is an affine subspace, we can write ${\mathcal{A}} = a + \mathcal{V}$. Define the translation $x' = x - a$, so that our problem reduces to projecting $x'$ onto the linear subspace $\mathcal{V}$. By the Hilbert projection theorem for linear subspaces, there exists a unique $v_0 \in \mathcal{V}$ such that
    \begin{equation}
        \|x' - v_0\| = \inf_{v \in \mathcal{V}} \|x' - v\|.
    \end{equation}
    This implies that $x' - v_0$ is orthogonal to $\mathcal{V}$, i.e.,
    \begin{equation}
        x' - v_0 \perp \mathcal{V}.
    \end{equation}
    Setting $p = a + v_0 \in {\mathcal{A}}$, we obtain
    \begin{equation}
        x - p = (x' - v_0) \perp \mathcal{V}.
    \end{equation}
    Thus, $p$ is the unique closest point in ${\mathcal{A}}$ to $x$, and the orthogonality condition still holds relative to the associated direction subspace $\mathcal{V}$.
\end{proof}

\subsection{Tangent affine subspace to a nonlinear manifold in $\mathbb{R}^n$}
\label{sec:AffineTangentToS}

Consider a point $\boldsymbol{x}^o \in \mathbb{R}^n$ and suppose the relationship $\boldsymbol{x} = \boldsymbol{x}(\boldsymbol{\theta})$ is nonlinear, giving rise to a possibly non-convex prediction surface ${\mathcal{M}_{\boldsymbol{x}}}$. Suppose further that an isolated point $\hat{\boldsymbol{x}} \in {\mathcal{M}_{\boldsymbol{x}}}, \,\, \hat {\boldsymbol{x}} = {\boldsymbol{x}}(\hat{\boldsymbol{\theta}})$ exists of locally minimal distance from ${\boldsymbol{x}}^o$, \textit{i.e.} that there exists a neighborhood \,$\mathcal{U} \subset {\mathcal{M}_{\boldsymbol{x}}}\,,\,\,\, \hat{\boldsymbol{x}} \in \mathcal{U}$ \,such that $\| {\boldsymbol{x}}^o - \hat {\boldsymbol{x}}\| < \| {\boldsymbol{x}}^o - {\boldsymbol{x}}\| \,\,\,\forall {\boldsymbol{x}} \neq \hat {\boldsymbol{x}}$, see Figure \ref{fig:NonLinearCaseSpaceReparametrizedRn}.

Reparametrize the relationship, setting:

\begin{equation*}
	{\boldsymbol{\zeta}} := {\boldsymbol{\theta}} - \hat{{\boldsymbol{\theta}}} \quad, \quad \quad {\boldsymbol{\theta}} = {\boldsymbol{\zeta}} + \hat{{\boldsymbol{\theta}}}
\end{equation*}

and

\begin{equation*}
	{\boldsymbol{y}} \,=\, \boldsymbol{y}({\boldsymbol{\zeta}}) \,:=\, {\boldsymbol{x}} ({\boldsymbol{\zeta}} + \hat{{\boldsymbol{\theta}}}) -  \hat {\boldsymbol{x}} \,=\, {\boldsymbol{x}} ({\boldsymbol{\theta}}) -  \hat {\boldsymbol{x}} \,=\, {\boldsymbol{x}} ({\boldsymbol{\theta}}) -  {\boldsymbol{x}}(\hat{{\boldsymbol{\theta}}}) \,,
\end{equation*}

with

\begin{equation*}
	\mathcal{Y} = \{{\boldsymbol{y}}({\boldsymbol{\zeta}}), {\boldsymbol{\zeta}} \in {\boldsymbol{\Theta}} - \hat {\boldsymbol{\theta}}\} \,.
\end{equation*}

It follows that

\begin{equation*}
	\hat {\boldsymbol{y}} = {\boldsymbol{y}}(0)  = {{\boldsymbol{0}}},  \quad \quad {\boldsymbol{y}}^o  = {\boldsymbol{x}}^o - {\boldsymbol{x}}(\hat{{\boldsymbol{\theta}}}) \,,
	\quad \quad {\boldsymbol{y}}^* = {\boldsymbol{y}}({\boldsymbol{\zeta}}^*) = {\boldsymbol{y}}({\boldsymbol{\theta}}^* - \hat{\boldsymbol{\theta}})
\end{equation*}

where for clarity $0$ indicates the zero vector in $\mathbb{R}^q$ and ${\boldsymbol{0}}$ indicates the zero vector in $\mathbb{R}^n$.

Notice that

\begin{align*}
    \frac{\partial {\boldsymbol{y}}({\boldsymbol{\zeta}})}{\partial {\boldsymbol{\zeta}} } & =  \frac{\partial \left [{\boldsymbol{x}}({\boldsymbol{\theta}}) - \hat {\boldsymbol{x}} \right ]}{\partial {\boldsymbol{\zeta}} }\\
    & = \frac{\partial {\boldsymbol{x}}({\boldsymbol{\theta}})}{\partial {\boldsymbol{\zeta}} } \\
    & = \frac{\partial {\boldsymbol{x}}({\boldsymbol{\theta}})}{\partial {\boldsymbol{\theta}} }
    \frac{\partial {\boldsymbol{\theta}}}{\partial {\boldsymbol{\zeta}} } \\
    & = \frac{\partial {\boldsymbol{x}}({\boldsymbol{\theta}})}{\partial {\boldsymbol{\theta}} } \,,
\end{align*}

and in particular

\begin{align*}
    \left . \frac{\partial {\boldsymbol{y}}({\boldsymbol{\zeta}})}{\partial {\boldsymbol{\zeta}} } \right |_0 & =  \left . \frac{\partial \left [{\boldsymbol{x}}({\boldsymbol{\theta}}) - \hat{\boldsymbol{x}} \right ]}{\partial {\boldsymbol{\zeta}} } \right |_0\\
    & = \left . \frac{\partial {\boldsymbol{x}}({\boldsymbol{{\boldsymbol{\theta}}}})}{\partial {\boldsymbol{\zeta}} }  \right |_0\\
    & = \left . \frac{\partial {\boldsymbol{x}}({\boldsymbol{\theta}})}{\partial {\boldsymbol{\theta}} } \right |_{\hat {\boldsymbol{\theta}}}
    \left . \frac{\partial {\boldsymbol{\theta}}}{\partial {\boldsymbol{\zeta}} }  \right  |_0\\
    & = \left . \frac{\partial {\boldsymbol{x}}({\boldsymbol{\theta}})}{\partial {\boldsymbol{\theta}} } \right |_{\hat {\boldsymbol{\theta}}} \,, \quad \text{or}\\
    \dot{\boldsymbol{y}}(0) &= \dot{\boldsymbol{x}}(\hat{\boldsymbol{\theta}})
\end{align*}

We can thus construct the affine subspace

\begin{equation*}
	\tilde {\boldsymbol{x}} = \tilde {\boldsymbol{x}}({\boldsymbol{\eta}}) = \hat {\boldsymbol{x}} + \dot{{\boldsymbol{x}}} (\hat{{\boldsymbol{\theta}}}) \,\, {\boldsymbol{\eta}} = \hat {\boldsymbol{x}} +    \left .     \frac{\partial     {\boldsymbol{x}}({\boldsymbol{\theta}})}{\partial     {\boldsymbol{\theta}}} \right | _{\hat{{\boldsymbol{\theta}}}} \,\, {\boldsymbol{\eta}} \quad \quad {\boldsymbol{\eta}} \in \mathbb{R}^q
\end{equation*}

and the corresponding linear subspace

\begin{equation*}
	\tilde {\boldsymbol{y}} = \tilde {\boldsymbol{y}}({\boldsymbol{\eta}})
	=\dot{{\boldsymbol{y}}}(0) \,\,  {\boldsymbol{\eta}}
	= \left . \frac{\partial {\boldsymbol{y}}({\boldsymbol{\zeta}})}{\partial {\boldsymbol{\zeta}}} \right |_0 \,\,{\boldsymbol{\eta}} \,\, \quad \quad {\boldsymbol{\eta}} \in \mathbb{R}^q\,.
\end{equation*}

By construction, an isolated minimum of $\| {\boldsymbol{y}}^o - {\boldsymbol{y}} \|$ at $\hat {\boldsymbol{y}} = {\boldsymbol{y}}(0) = {\boldsymbol{0}}$ exists and we can define a (locally unique) projection operator $\mathcal P_{T_{\hat {\boldsymbol{y}}}{\mathcal{Y}}} : \mathbb R ^n \rightarrow T_{\hat {\boldsymbol{y}}}{\mathcal{Y}}$,  \quad $ \mathcal P_{T_{\hat {\boldsymbol{y}}}{\mathcal{Y}}} = \dot {\boldsymbol{y}} (\dot {\boldsymbol{y}} ^ T \dot{\boldsymbol{y}})^{-1} \dot {\boldsymbol{y}}^\intercal $ , $\mathcal P_{T_{\hat {\boldsymbol{y}}}{\mathcal{Y}}} {\boldsymbol{y}}^o = {\boldsymbol{0}}$, see Figure \ref{fig:NonLinearCaseSpaceReparametrizedRn}.

 It is then immediate to refer the statistical computations done on the linear subspace tangent to the reparametrized prediction surface back to the  affine space tangent to the original prediction surface.

\begin{figure}
	\includegraphics[height=7cm]{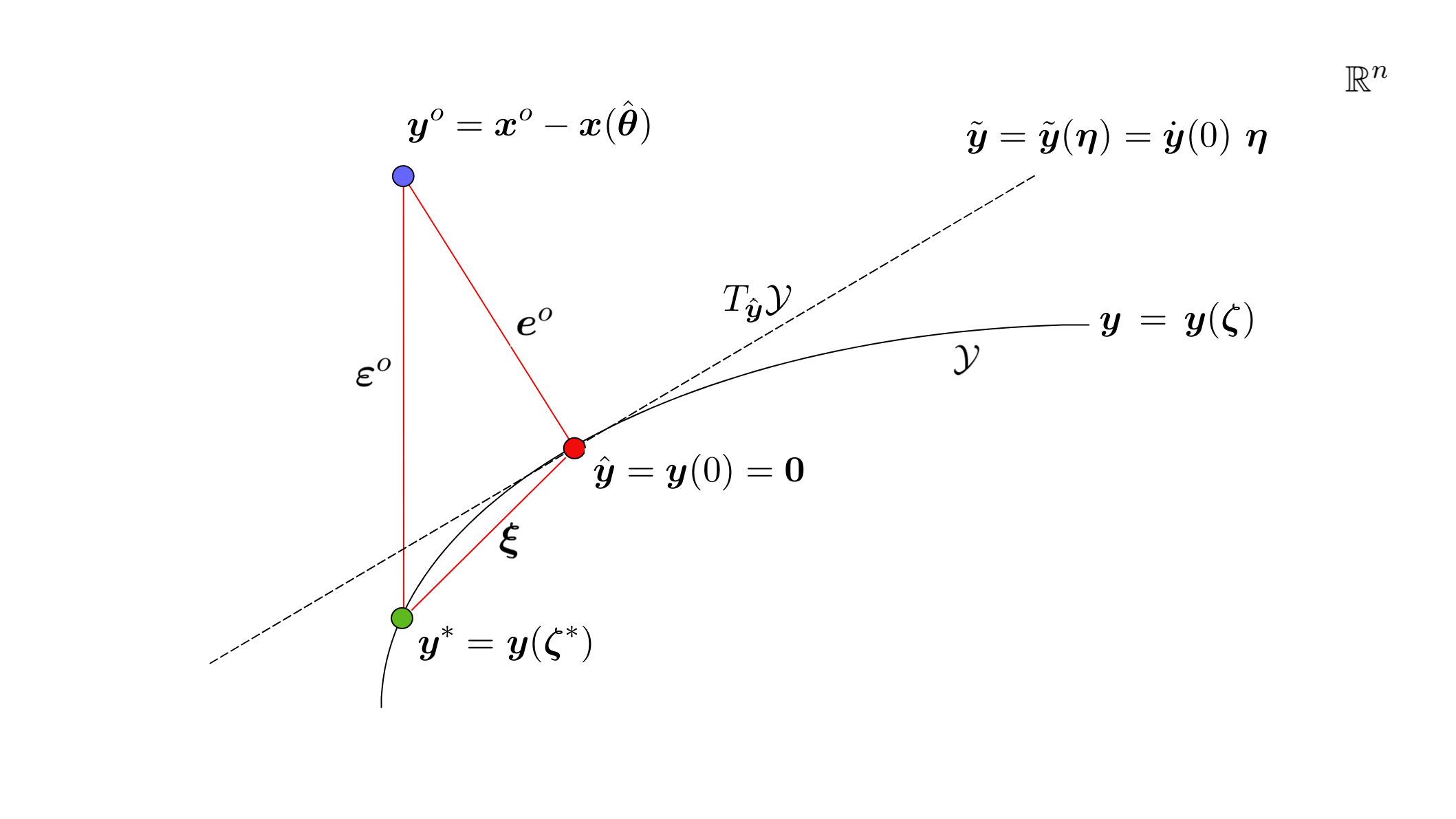}
	\caption{Nonlinear relationship in case space: reparametrized geometry}
	\label{fig:NonLinearCaseSpaceReparametrizedRn}
\end{figure}

\newpage
\subsection{The structure of the total derivative}
\label{sec:TotalDerivative}

The purpose of the present Appendix is to clarify the switch of index position, which will be used in the next Appendix \ref{sec:AffineTangentToM}.

Let $ {\boldsymbol{f}} : \mathbb{R}^m \to \mathbb{R}^n $ be a sufficiently regular (i.e.\ differentiable) function, written in components as

\begin{equation*}
		{\boldsymbol{f}}({\boldsymbol{x}}) = (f^1({\boldsymbol{x}}), \dots, f^n({\boldsymbol{x}}))^\intercal.	
\end{equation*}

The total derivative of ${\boldsymbol{f}}$ at ${\boldsymbol{x}} \in \mathbb{R}^m$ is the linear map

\begin{equation*}
		{\boldsymbol{f}}'({\boldsymbol{x}}) = D{\boldsymbol{f}}({\boldsymbol{x}}) \in \mathcal{L}(\mathbb{R}^m, \mathbb{R}^n)
\end{equation*}

satisfying the defining equation

\begin{equation}
	{\boldsymbol{f}}({\boldsymbol{x}}+{\boldsymbol{h}}) - {\boldsymbol{f}}({\boldsymbol{x}}) = {\boldsymbol{f}}'({\boldsymbol{x}}) \cdot {\boldsymbol{h}} + o(\|{\boldsymbol{h}}\|), \qquad \text{as } {\boldsymbol{h}} \to {\boldsymbol{0}}.
	\label{eq:def}
	\notag
\end{equation}

Note that we habitually use the notation ${\boldsymbol{f}}'|_{\boldsymbol{x}} := {\boldsymbol{f}}'({\boldsymbol{x}})$. 

In the scalar case $n=1$, the derivative $f'({\boldsymbol{x}})$ is an element of
$\mathcal{L}(\mathbb{R}^m, \mathbb{R})$, which one identifies with
row vectors in $\mathbb{R}^{1 \times m}$ or equivalently with the gradient
$\nabla f({\boldsymbol{x}}) \in \mathbb{R}^m$.

\medskip

By assumption of sufficient regularity, for each ${\boldsymbol{x}} \in \mathbb{R}^m$ there exists a linear map $A_{\boldsymbol{x}} : \mathbb{R}^m \to \mathbb{R}^n$
satisfying the defining equation above. Defining now ${\boldsymbol{f}}'|_{\boldsymbol{x}} = {\boldsymbol{f}}'({\boldsymbol{x}}) := A_{\boldsymbol{x}}$ yields a function

\begin{equation*}
	{\boldsymbol{f}}' : \mathbb{R}^m \longrightarrow \mathcal{L}(\mathbb{R}^m, \mathbb{R}^n),
	\qquad {\boldsymbol{x}} \longmapsto {\boldsymbol{f}}'({\boldsymbol{x}}),	
\end{equation*}

that associates to each ${\boldsymbol{x}}$ a linear transformation ${\boldsymbol{f}}'|_{\boldsymbol{x}} = {\boldsymbol{f}}'({\boldsymbol{x}})$.

Hence ${\boldsymbol{f}}'$ maps points of $\mathbb{R}^m$ into the space of linear maps
$\mathcal{L}(\mathbb{R}^m, \mathbb{R}^n)$.
When $p = 1$, this is $\mathcal{L}(\mathbb{R}^m, \mathbb{R}) \cong \mathbb{R}^{1 \times m}$,
so $f'({\boldsymbol{x}})$ may be regarded as a row vector.\\

Now fix $i \in \{1, \dots, n\}$.
Take the $i$-th component of both sides of the total derivative defining equation above:
\begin{equation*}
	f^i({\boldsymbol{x}} + {\boldsymbol{h}}) - f^i({\boldsymbol{x}})
	= \big( {\boldsymbol{f}}'({\boldsymbol{x}}) \cdot {\boldsymbol{h}} \big)^i + o(\|{\boldsymbol{h}}\|).
\end{equation*}

The map ${\boldsymbol{h}} \mapsto ({\boldsymbol{f}}'({\boldsymbol{x}}) \cdot {\boldsymbol{h}})^i$ is linear in ${\boldsymbol{h}}$ and satisfies the
defining equation for the total derivative of the scalar function $f^i$ at ${\boldsymbol{x}}$.
Hence this map is precisely the derivative of $f^i$ at ${\boldsymbol{x}}$:
\begin{equation*}
		({\boldsymbol{f}}'({\boldsymbol{x}}))^i = (f^i)'({\boldsymbol{x}}).
\end{equation*}
% starting with version V18 the further argument (involving the Jacobian and the coordinate-wise proof) as well as an example have been eliminated, they can be found in V17.

\subsection{Tangent affine subspace to a nonlinear manifold in  $\mathcal{H}$}
\label{sec:AffineTangentToM}

Consider the tangent space to ${\mathcal{M}_{\boldsymbol{X}}}$ at $\hat {\boldsymbol{X}}$: in general this is an affine, not a linear subspace of $\mathcal{H}$. Similarly to the procedure followed for treating a tangent space to ${\mathcal{M}_{\boldsymbol{x}}} \subset \mathbb{R}^n$ in Appendix \ref{sec:AffineTangentToS} above and denoting with ${\boldsymbol{X}}:\boldsymbol{\Theta} \in {\mathbb R}^q \to {\mathcal{H}}^n$ the model map, let ${\boldsymbol{\zeta}} = {\boldsymbol{\theta}} - \hat{{\boldsymbol{\theta}}}$ , ${\boldsymbol{Y}}({\boldsymbol{\zeta}}) := {\boldsymbol{X}}({\boldsymbol{\zeta}}+\hat{{\boldsymbol{\theta}}})-\hat{{\boldsymbol{X}}}= {\boldsymbol{X}}({\boldsymbol{\theta}})-\hat{{\boldsymbol{X}}}$. Consequently, let ${\boldsymbol{Y}}={\boldsymbol{Y}}({\boldsymbol{\zeta}}) : \boldsymbol{Z} \in {\mathbb R}^q \to \mathcal{H}$ and  $\mathcal{Y} = \{{\boldsymbol{Y}}({\boldsymbol{\zeta}}),{\boldsymbol{\zeta}}\in ({\boldsymbol{\Theta}} - \hat {\boldsymbol{\theta}}) \subset \mathbb{R}^q\}$ be respectively the reparametrized model map and the corresponding model manifold.\\

Clearly $\hat {\boldsymbol{Y}} = {\boldsymbol{Y}}(0) = {\boldsymbol{X}}(\hat{{\boldsymbol{\theta}}}) - \hat {\boldsymbol{X}} = \boldsymbol{0} \in  \mathcal{H}$ \,\,for  \,\, $0 \in \mathbb{R}^q$.\\

Notice that we can define the usual derivatives in $\mathcal{H}$ because we have a vector space structure and a metric there (conversely, it is not possible in general to define a classical derivative on the manifold itself).\\

\smallskip

We want to show $\frac{\partial {\boldsymbol{Y}}}{\partial {\boldsymbol{\zeta}}} = \frac{\partial {\boldsymbol{X}}}{\partial {\boldsymbol{\theta}}}$. In fact, \\
\begin{align*}
    \frac{\partial {\boldsymbol{Y}}}{\partial {\boldsymbol{\zeta}}} &= \frac{\partial ({\boldsymbol{X}}({\boldsymbol{\zeta}}+\hat{{\boldsymbol{\theta}}})-\hat{{\boldsymbol{X}}})}{\partial {\boldsymbol{\zeta}}}\\
    &= \frac{\partial ({\boldsymbol{X}}({\boldsymbol{\zeta}}+\hat{{\boldsymbol{\theta}}})-\hat{{\boldsymbol{X}}})}{\partial {\boldsymbol{\theta}}}\frac{\partial {\boldsymbol{\theta}}}{\partial {\boldsymbol{\zeta}}}\\
    &= \frac{\partial ({\boldsymbol{X}}({\boldsymbol{\theta}}))}{\partial {\boldsymbol{\theta}}}\frac{\partial ({\boldsymbol{\zeta}}+\hat{{\boldsymbol{\theta}}})}{\partial {\boldsymbol{\zeta}}} \\
    &= \frac{\partial {\boldsymbol{X}}({\boldsymbol{\theta}})}{\partial {\boldsymbol{\theta}}} \, .
\end{align*}

Notice that ${\dot{{\boldsymbol{Y}}}}_0 := \left. \frac{\partial {\boldsymbol{Y}}}{\partial {\boldsymbol{\zeta}}}\right|_0 : \, \Omega \rightarrow \mathbb{R}^n$ is a (generally $n$-dimensional) random variable because the differential operator maps functions with given domain and codomain to functions with the same domain and codomain, hence $\frac{\partial {\boldsymbol{Y}}}{\partial {\boldsymbol{\zeta}}} : \,  \mathbb{R}^q \to \mathcal{H}$ and $\left. \frac{\partial {\boldsymbol{Y}}}{\partial {\boldsymbol{\zeta}}}\right|_0 \in \mathcal{H}$.\\

Define $\tilde{{\boldsymbol{Y}}} : {\mathbb R}^q \to \mathcal{H},\quad \tilde{{\boldsymbol{Y}}}({\boldsymbol{\eta}})={\dot{{\boldsymbol{Y}}}}_0 \cdot {\boldsymbol{\eta}}$, \, ${\boldsymbol{\eta}} \in \mathbb{R}^q$, then
$\tilde{\mathcal{Y}} = \{\tilde{{\boldsymbol{Y}}}({\boldsymbol{\eta}})\}$ is a $q$-dimensional subspace of $\mathcal{H}$.\\

Now consider the (shifted and reparametrized) model manifold \,\,$\mathcal{Y} = \{{\boldsymbol{Y}}({\boldsymbol{\zeta}})\} = \{{\boldsymbol{X}}({\boldsymbol{\zeta}} + \hat{{\boldsymbol{\theta}}})-\hat{{\boldsymbol{X}}}, \, {\boldsymbol{\zeta}} \in {\boldsymbol{Z}} = ({\boldsymbol{\Theta}} - \hat{{\boldsymbol{\theta}}}) \subset \mathbb{R}^q\}$ \,\, as well as the original model manifold ${\mathcal{M}_{\boldsymbol{X}}}= \{{\boldsymbol{X}}({\boldsymbol{\theta}}), \, {\boldsymbol{\theta}} \in {\boldsymbol{\Theta}} \subset \mathbb{R}^q\}$. Notice that $\forall {\boldsymbol{\theta}} \in {\boldsymbol{\Theta}} \subset \mathbb{R}^q \,\, \exists \,\, {\boldsymbol{X}}({\boldsymbol{\theta}}) \in {\mathcal{M}_{\boldsymbol{X}}}$ by definition.\\

Define a map $\varphi : \mathcal{H} \rightarrow \mathcal{H},\,\, \varphi : \boldsymbol{U} \mapsto \boldsymbol{U} - {\hat{\boldsymbol{X}}}$. Notice that its restriction to ${\mathcal{M}_{\boldsymbol{X}}}$ maps the whole ${\mathcal{M}_{\boldsymbol{X}}}$ onto $\mathcal{Y}$, in fact
\begin{align*}
	\varphi \left({\boldsymbol{X}}({\boldsymbol{\theta}})\right) 
	&= {\boldsymbol{X}}({\boldsymbol{\theta}})-\hat{{\boldsymbol{X}}}\\
	&= {\boldsymbol{X}}({\boldsymbol{\theta}}-\hat{{\boldsymbol{\theta}}}+\hat{{\boldsymbol{\theta}}})-\hat{{\boldsymbol{X}}}\\
	&= {\boldsymbol{X}}({\boldsymbol{\zeta}({\boldsymbol{\theta}})} + \hat{{\boldsymbol{\theta}}}) - \hat{{\boldsymbol{X}}}\\
	&= {\boldsymbol{Y}}({\boldsymbol{\zeta}})
\end{align*}
since ${\boldsymbol{\zeta}}({\boldsymbol{\theta}})={\boldsymbol{\theta}}-\hat{{\boldsymbol{\theta}}}$ and ${\boldsymbol{Y}}({\boldsymbol{\zeta}}) = {\boldsymbol{X}}({\boldsymbol{\zeta}}+\hat{{\boldsymbol{\theta}}})-\hat{{\boldsymbol{X}}}$, so $\varphi : {\mathcal{M}_{\boldsymbol{X}}}\rightarrow \mathcal{Y}$. In other words, the following diagram commutes:

\smallskip

\begin{center}
    \begin{tikzpicture}
        \begin{scope}[every node/.style={draw=white!60,thick,draw}]
            \node (A) at (0,0) {$ {\boldsymbol{\Theta}} \subset \mathbb{R}^q$};
            \node (B) at (3,0) {${\mathcal{M}_{\boldsymbol{X}}} \subset \mathcal{H}$};
            \node (C) at (0,-2)  {${\boldsymbol{Z}} \subset \mathbb{R}^q$};
            \node (D) at (3,-2)  {$\mathcal{Y} \subset \mathcal{H}$};
        \end{scope}

        \begin{scope}[>={Stealth[black]},
            every node/.style={},
            every edge/.style={draw=black}]
            \path [->] (A) edge [above] node {${\boldsymbol{X}}$} (B);
            \path [->] (A) edge [left] node {${\boldsymbol{\zeta}}$} (C);
            \path [->] (B) edge [right] node {$\varphi$} (D);
            \path [->] (C) edge [above] node {${\boldsymbol{Y}}$} (D);
        \end{scope}
    \end{tikzpicture}
    %\caption{Composition of maps}
    %\label{fig:Fig001}
\end{center}

Notice that ${\boldsymbol{Y}}(0)={\boldsymbol{X}}(0 + \hat{{\boldsymbol{\theta}}})-\hat{{\boldsymbol{X}}} = \boldsymbol{0} \in \mathcal{H}$ as well as $\varphi(\hat{{\boldsymbol{X}}})= \boldsymbol{0} \in \mathcal{H}$, where we indicate with $0$ the zero in $\mathbb{R}^q$ and with ${\boldsymbol{0}}$ the zero random variable in $\mathcal{H}$.\\
Notice also that a model map ${\boldsymbol{X}}(\cdot)$ or ${\boldsymbol{Y}}(\cdot)$ is not a random variable itself but a function to obtain random variables from parameter values.\\
Notice finally that $\tilde{{\boldsymbol{Y}}}$ is also not a random variable, but a linear map \, $\mathbb{R}^q\rightarrow\mathcal{H}$\,: letting ${\boldsymbol{\eta}}\in\mathbb{R}^q$, we have that $\tilde{{\boldsymbol{Y}}}({\boldsymbol{\eta}})={\dot{{\boldsymbol{Y}}}}_0\cdot{\boldsymbol{\eta}}$ is a random variable and $\tilde{\mathcal{Y}}=\{\tilde{{\boldsymbol{Y}}}({\boldsymbol{\eta}}), \, {\boldsymbol{\eta}}\in\mathbb{R}^q\}$ is a subspace of $\mathcal{H}$.\\

\begin{figure}
    \includegraphics[height=7cm]{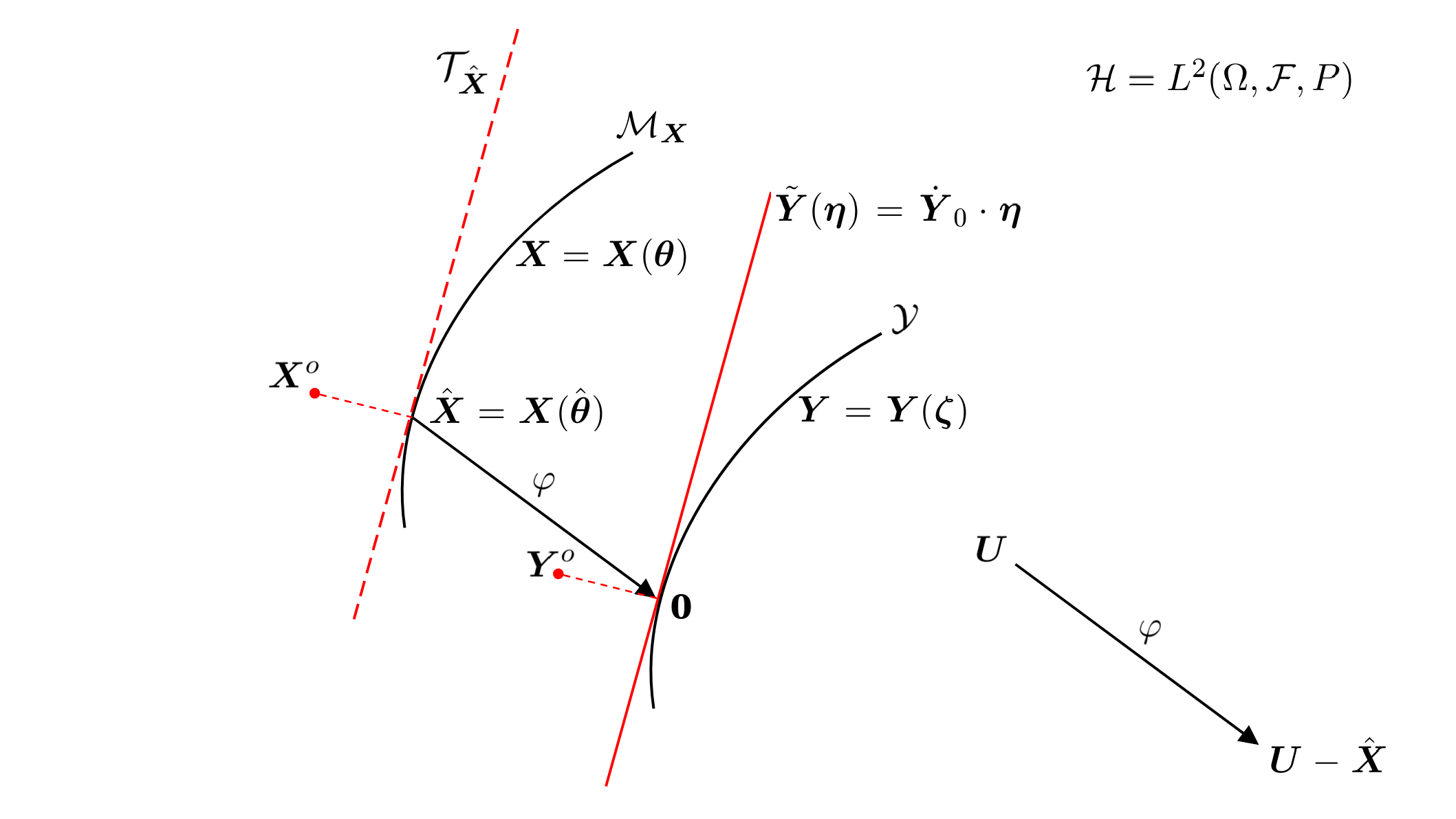}
    \caption{Nonlinear models in the Hilbert space of finite-variance random variables: original and reparametrized models}
    \label{fig:NonLinearHilbertReparametrized}
\end{figure}

While it is now clear that $\tilde{\mathcal{Y}}$ is a random variable (sub)space, it is not clear that this subspace is tangent to the reparametrized manifold $\mathcal{Y}$ at $\boldsymbol{0}$.\\

\bigskip

Recall that, given a smooth atlas $\mathcal{A}$ on ${\mathcal{M}_{\boldsymbol{X}}}$, \\
\centerline{
$\mathcal{A} = \left\{\,(x,\mathcal{U}) : \,\, (x,\mathcal{U}),(y,\mathcal{V}) \in \mathcal{A} \Rightarrow x \circ y^{-1} \in \mathcal{C}^\infty (\mathbb{R}^q) \right\} $,\\
}
a function $f : {\mathcal{M}_{\boldsymbol{X}}} \to \mathbb{R}$ is smooth, $f \in \mathcal{C}^{\infty}({\mathcal{M}_{\boldsymbol{X}}})$, if $f \circ x^{-1} : \mathbb{R}^q \to \mathbb{R}$ is smooth  $ \,\,\forall\, (x,\mathcal{U}) \in \mathcal{A}$.

\textbf{In general}, we can build a tangent space to a (differential) manifold at a point by considering the set of velocities, \textit{i.e.} of differential operators, which map functions $f \in \mathcal{C}^{\infty}(\mathcal{Y})$ ,   $f:\mathcal{Y}\rightarrow\mathbb{R}$ (WLOG, if a function to $\mathbb{R}^q$ is desired, consider it component-wise), to their directional derivatives in the direction of curves $\gamma:\mathbb{R}\rightarrow \mathcal{Y}$, $T_p\mathcal{Y} = \{\vartheta_{\gamma,p}$ such that $\, p =\gamma(\lambda)\in \mathcal{U}\subset \mathcal{Y}$ and $\forall f \in C^{\infty}(\mathcal{Y}) $ it is $\, \vartheta_{\gamma,p}(f)=(f\circ\gamma)'|_{\lambda}\}$.\\
We can define a chart-induced basis of this tangent space: we can express the directional derivative of a function $f$ in the direction of a curve $\gamma$ with respect to a specific chart map $\mu$ as the dot product of the components of the derivative of $\gamma_\mu$ at $\lambda$, denoted as \quad ${\dot{\gamma}_p} := {\dot{\gamma}_\mu}|_{\lambda} $, times the application of the operator \quad $\frac{\partial}{\partial \mu}|_p := (\cdot \circ \mu^{-1})'|_{p_\mu}$ \quad to the function $f$.\\

\begin{center}
	\begin{tikzpicture}
		\begin{scope}[every node/.style={draw=white!60,thick,draw}]
			\node (A) at (0,0)  {$\mathcal{Y}$};
			\node (B) at (-2,0) {$\mathbb{R}$};
			\node (C) at (2,0) {$\mathbb{R}$};
			\node (D) at (0,-2) {$\mathbb{R}^q$};
		\end{scope}
		
		\begin{scope}[>={Stealth[black]},
			every node/.style={},
			every edge/.style={draw=black}]
			\coordinate (Aleft) at ($(A)+(-4pt,-8pt)$);
			\coordinate (Dleft) at ($(D)+(-4pt,8pt)$);
			\coordinate (Aright) at ($(A)+(4pt,-8pt)$);
			\coordinate (Dright) at ($(D)+(4pt,8pt)$);
			\path [->] (B) edge [above] node {$\gamma$} (A);
			\path [->] (B) edge [below left] node {$\mu \circ \gamma$} (D);
			\path [->] (D) edge [below right] node {$f \circ \mu^{-1}$} (C);
			\path [->] (A) edge [above] node {$f$} (C);
			\path[->] (Aleft) edge node[left] {$\mu$} (Dleft);
			\path[->] (Dright) edge node[right] {$\mu^{-1}$} (Aright);
		\end{scope}
	\end{tikzpicture}
	%\caption{Composition of maps}
	%\label{fig:Fig001}
	%\end{minipage}%
\end{center}

More precisely, let $\gamma:\mathbb{R}\rightarrow \mathcal{Y}$,\, $f:\mathcal{Y}\rightarrow\mathbb{R}$,\, $\mathcal{U}\subset \mathcal{Y}$, \,$p=\gamma(\lambda) \in\mathcal{U}$, \,$\mu:\mathcal{U}\rightarrow\mathbb{R}^q$, \,\,then

\begin{align*}
    (f\circ\gamma)'|_{\lambda} &= (f\circ \mu^{-1}\circ \mu\circ\gamma)'|_{\lambda}\\
    &= (f\circ \mu^{-1})'|_{\mu \circ \gamma(\lambda)}\cdot(\mu\circ\gamma)'|_{\lambda}\\
    &=(f\circ \mu^{-1})'|_{p_\mu}\cdot\, {\dot{\gamma}_\mu}|_{\lambda}\\
    &={\dot{\gamma}_\mu}|_{\lambda} \cdot\, (f\circ \mu^{-1})'|_{p_\mu}\\
    &:= {\dot{\gamma}_p}\cdot \frac{\partial}{\partial \mu}|_p(f) .
\end{align*}

Abstracting from the specific function $f$ to which the differential operator is applied, we may thus define $\vartheta_{\gamma,p} := \frac{\partial}{\partial \mu}|_p(\cdot)$ . \\

We call the curve $\gamma_\mu :\mathbb{R}\rightarrow\mathbb{R}^q$ the \underline{representation} of the curve $\gamma$ under the chart map $\mu$ and similarly $p_\mu=\mu(p)=\mu(\gamma(\lambda)) \in {\mathbb R}^q$ the \underline{representation} of $p$ under the chart map $\mu$.\\

Notice that ${\dot{\gamma}_p} := {\dot{\gamma}_\mu}|_{\lambda} $ is a $q$-dimensional scalar depending on the curve $\gamma$ and that $\frac{\partial}{\partial \mu}|_p $ is the differential operator $\frac{\partial}{\partial \mu}|_p := (\cdot \circ \mu^{-1})'|_{p_\mu}$. We are thus led to consider the set (indeed, vector space) $T_p\mathcal{Y}$ of differential operators generated by all linear combinations $\boldsymbol{g} \frac{\partial}{\partial \mu}|_p $, with $\boldsymbol{g} \in \mathbb{R}^q$. It can further be proved that $\,\forall \,\boldsymbol{g} \in \mathbb{R}^q \,\, \exists \,\gamma : \mathbb{R} \to \mathcal{Y} , \, {{\dot{\gamma}}_\mu}|_{\lambda} = \boldsymbol{g}$.

Therefore, since\\

\begin{align*}
    \boldsymbol{g} \frac{\partial}{\partial \mu}|_p &= \boldsymbol{g} \, \left.(\cdot \circ \mu^{-1})'\right|_{p_\mu}\\
    &= \sum^q_{j=1} g_j \left[\left.{\left(\cdot \circ \mu^{-1}\right)'}\right|_{p_\mu}\right]^j\\
    &= \sum^q_{j=1} g_j \left[\left.\left(\cdot \circ (\mu^{-1})^j\right)'\right|_{p_\mu}\right]
\end{align*}

(see Appendix \ref{sec:TotalDerivative} above) we have that a basis for the tangent space \,\,
$ T_p\mathcal{Y} = \{\vartheta_{\gamma,p}\} $ \,\,
is
\begin{equation*}
	\left\{\frac{\partial}{\partial \mu^j}|_p\right\}_{j=1,...,q}=\left\{\left.\left(\cdot \circ (\mu^{-1})^j\right)'\right|_{p_\mu}\right\}_{j=1,...,q}
\end{equation*}
\\

\bigskip

\textbf{In particular}, for the purpose of the present discussion, we will compute a basis for $T_p\mathcal{Y}$ induced by the canonical chart map ${\boldsymbol{Y}}^{-1}$: \\

\begin{center}
    \begin{tikzpicture}
        \begin{scope}[every node/.style={draw=white!60,thick,draw}]
            \node (A) at (0,0) {$\mathcal{Y}$};
            \node (B) at (0,-2) {$\mathbb{R}^q$};
        \end{scope}

        \begin{scope}[>={Stealth[black]},
            every node/.style={},
            every edge/.style={draw=black}]
            \path [->] (A) edge [bend right=50] [left] node {${\boldsymbol{Y}}^{-1}$} (B);
            \path [->] (B) edge [bend right=50][right] node {${\boldsymbol{Y}}$} (A);
        \end{scope}
    \end{tikzpicture}
\end{center}

Clearly we must require sufficient regularity in the model map ${\boldsymbol{X}}$, and hence ${\boldsymbol{Y}}$, to do this, such as continuous differentiability with a continuously differentiable inverse. Under these conditions it is evident, given the model map ${\boldsymbol{Y}}: \mathbb{R}^q \rightarrow {\mathcal{Y}}$, that ${\boldsymbol{Y}}^{-1} : {\mathcal{Y}} \rightarrow \mathbb{R}^q$ is a chart map, indeed a global chart map. Using as basis of the tangent space the basis induced by the chart map $({\boldsymbol{Y}}^{-1})$ and following the general development above we can write:

\begin{equation*}
    \{(\cdot \circ (({\boldsymbol{Y}}^{-1})^{-1})^j)'|_{\boldsymbol{0}}\}_{j=1,...,q}=\{(\cdot \circ {\boldsymbol{Y}}^j)'|_0\}_{j=1,...,q}
\end{equation*}

but ${\boldsymbol{Y}}'=\dot{{\boldsymbol{Y}}}$ and $({\boldsymbol{Y}}^j)'=(\dot{{\boldsymbol{Y}}})^j=\dot{{\boldsymbol{Y}}}^j$ \,\,\, hence the tangent space is the span  $T_{\boldsymbol{0}}{\mathcal{Y}} = \mathcal{S}\{ \dot{{\boldsymbol{Y}}}^j\}_{j=1,...,q}$ \,\, WRT the canonical model chart map $({\boldsymbol{Y}}^{-1})$. This means that for any ${\boldsymbol{\eta}}\in\mathbb{R}^q$ it is  $( {\dot{\boldsymbol{Y}}}_0 \cdot {\boldsymbol{\eta}})\in T_{\boldsymbol{0}}\mathcal{Y} \subset \mathcal{H}$ and ${\dot{\boldsymbol{Y}}}_0 \cdot {\boldsymbol{\eta}}$ is a random variable in the tangent space to the (reparametrized) model manifold $\mathcal{Y}$ at $\boldsymbol{0}$, as desired.\\

In other words, we have constructed a subspace $T_{\boldsymbol{0}}\mathcal{Y} \subset \mathcal{H}$ of random variables tangent to  $\mathcal{Y}$ at $\boldsymbol{0}$ using the model canonical chart map ${\boldsymbol{Y}}^{-1}$.
Since this is a subspace, the Hilbert projection theorem applies and the translated random variable ${\boldsymbol{Y}}^o={\boldsymbol{X}}^o-\hat{{\boldsymbol{X}}}$ has a unique projection onto $T_{\boldsymbol{0}}\mathcal{Y}$ at $\boldsymbol{0}$, of global minimal distance from $T_{\boldsymbol{0}}\mathcal{Y} = \tilde{\mathcal{Y}}$, such that $\forall \, {\boldsymbol{Y}} \in \tilde{\mathcal{Y}}$ we have $({\boldsymbol{Y}}^o - \boldsymbol 0) = {\boldsymbol{Y}}^o \, \perp {\boldsymbol{Y}} = {\boldsymbol{Y}}  - \boldsymbol 0$.

\bigskip

It is now trivial to shift back from the linear tangent space  $\tilde{\mathcal{Y}}$ to the affine tangent space $\mathcal{T}_{\hat {\boldsymbol{X}}}$ to ${\mathcal{M}_{\boldsymbol{X}}}$ at $\hat{{\boldsymbol{X}}}$ by means of the inverse map $\varphi^{-1} : \tilde{\mathcal{Y}} \rightarrow \mathcal{T}_{\hat {\boldsymbol{X}}},\quad \varphi^{-1} : {\boldsymbol{U}} \mapsto {\boldsymbol{U}} + \hat {\boldsymbol{X}}$, so that the original random variable ${\boldsymbol{X}}^o$ has a unique projection onto $\hat {\boldsymbol{X}} \in \mathcal{T}_{\hat {\boldsymbol{X}}}$ , of global minimal distance from $\mathcal{T}_{\hat {\boldsymbol{X}}}$, such that $\forall \, {\boldsymbol{X}} \in \mathcal{T}_{\hat {\boldsymbol{X}}}$ we have $({\boldsymbol{X}}^o - \hat {\boldsymbol{X}}) \, \perp ({\boldsymbol{X}}  - \hat {\boldsymbol{X}})$.

\medskip

This affine tangent space to the original model manifold, $\mathcal{T}_{\hat {\boldsymbol{X}}}$, is an affine space of random variables and is the best local linear approximation to ${\mathcal{M}_{\boldsymbol{X}}}$ in a neighborhood of $\hat {\boldsymbol{X}}$ in the norm of the embedding Hilbert space $\mathcal{H}$.

\section*{Acknowledgment}
The work of Prof. Andrea De Gaetano was supported by the Distinguished Professor Excellence Program of \'Obuda University, Budapest Hungary.

\bibliography{NatMathModV06}

@Book{Seber2003,
  author    = {Seber, G. A. F. and Lee, A. J.},
  publisher = {Wiley},
  title     = {Linear regression analysis},
  year      = {2003},
  edition   = {2nd},
  series    = {Wiley Series in Probability and Mathematical Statistics},
}

@Book{Seber1989,
  author    = {Seber, G. A. F. and Wild, C. J.},
  publisher = {John Wiley \& Sons},
  title     = {Nonlinear regression},
  year      = {1989},
  address   = {New York},
}

@Book{Bates1988,
  author    = {Bates, D. M. and Watts, D. G.},
  publisher = {John Wiley \& Sons},
  title     = {Nonlinear regression analysis and its applications},
  year      = {1988},
  address   = {New York},
}

@Book{Bobrowski2013,
  author    = {Bobrowski, A.},
  publisher = {Cambridge University Press},
  title     = {Functional analysis for probability and stochastic processes},
  year      = {2013},
}

@Article{BarndorffNielsen1986,
  author  = {Barndorff-Nielsen, O. E. and others},
  journal = {International Statistical Review},
  title   = {The role of differential geometry in statistical theory},
  year    = {1986},
  number  = {1},
  pages   = {83--96},
  volume  = {54},
  doi     = {10.2307/1403260},
}

@InCollection{Fletcher2020,
  author    = {Fletcher, T.},
  booktitle = {Riemannian geometric statistics in medical image analysis},
  publisher = {Academic Press},
  title     = {Statistics on manifolds},
  year      = {2020},
  editor    = {Pennec, X. and Sommer, S. and Fletcher, T.},
  isbn      = {9780128147252},
  pages     = {39--74},
  doi       = {10.1016/B978-0-12-814725-2.00009-1},
}

@Article{Nielsen2020,
  author  = {Nielsen, F.},
  journal = {Entropy},
  title   = {An elementary introduction to information geometry},
  year    = {2020},
  number  = {10},
  pages   = {1100},
  volume  = {22},
  doi     = {10.3390/e22101100},
}

@Article{Amari2010,
  author  = {Amari, S. and Cichocki, A.},
  journal = {Bulletin of the Polish Academy of Sciences, Technical Sciences},
  title   = {Information geometry of divergence functions},
  year    = {2010},
  volume  = {58},
  doi     = {10.2478/v10175-010-0019-1},
}

@Misc{Brigo2011,
  author       = {Brigo, D.},
  howpublished = {hal-00640516v3},
  title        = {The direct $L^2$ geometric structure on a manifold of probability densities with applications to filtering},
  year         = {2011},
}

@Article{Amari2009,
  author  = {Amari, S.},
  journal = {IEEE Transactions on Information Theory},
  title   = {$\alpha$-divergence is unique, belonging to both $f$-divergence and Bregman divergence classes},
  year    = {2009},
  number  = {11},
  pages   = {4925--4931},
  volume  = {55},
  doi     = {10.1109/TIT.2009.2030485},
}

@Book{Small1994,
  author    = {Small, C. G. and McLeish, D. L.},
  publisher = {John Wiley \& Sons},
  title     = {Hilbert space methods in probability and statistical inference},
  year      = {1994},
  doi       = {10.1002/9781118165522},
}

@Book{Rudin1987,
  author    = {Rudin, W.},
  publisher = {McGraw-Hill},
  title     = {Real and complex analysis},
  year      = {1987},
  edition   = {3rd},
}

@Article{Panunzi2005,
  author  = {Panunzi, S. and De Gaetano, A. and Mingrone, G.},
  journal = {American Journal of Physiology},
  title   = {Approximate linear confidence and curvature of a kinetic model of a dodecanedioic acid in humans},
  year    = {2005},
}

@Book{Folland2013,
  author    = {Folland, G. B.},
  publisher = {John Wiley \& Sons},
  title     = {Real analysis: modern techniques and their applications},
  year      = {2013},
  edition   = {2nd},
}

@Book{Rohatgi1976,
  author    = {Rohatgi, V. K.},
  publisher = {John Wiley \& Sons},
  title     = {An introduction to probability theory and mathematical statistics},
  year      = {1976},
}

@Book{Tu2010,
  author    = {Tu, L. W.},
  publisher = {Springer},
  title     = {An introduction to manifolds},
  year      = {2010},
}

@Book{Lee2009,
  author    = {Lee, J. M.},
  publisher = {American Mathematical Society},
  title     = {Manifolds and differential geometry},
  year      = {2009},
}

@Book{Lang1995,
  author    = {Lang, S.},
  publisher = {Springer},
  title     = {Differential and Riemannian manifolds},
  year      = {1995},
}
\bibliographystyle{amsplain}

\end{document}